\newif\ifanon
\newcommand{\lparent}{\mathsf {P}}
\newcommand{\lleft}{\mathsf {L}}
\newcommand{\lright}{\mathsf {R}}
\newcommand{\lup}{\mathsf {U}}
\newcommand{\ldown}{\mathsf {D}}
\newcommand{\llch}{\ensuremath{\mathsf {Ch_L}}}
\newcommand{\lrch}{\ensuremath{\mathsf {Ch_R}}}
\newcommand{\lerror}{\mathsf {Error}}
\newcommand{\lgriderror}{\mathsf {GridError}}
\newcommand{\lverterror}{\mathsf {VertError}}
\newcommand{\ltreeerror}{\mathsf {TreeError}}
\newcommand{\lcolumnerror}{\mathsf {ColumnError}}
\newcommand{\lgridedge}{\mathsf {gridEdge}}
\newcommand{\lgridnode}{\mathsf {gridNode}}
\newcommand{\ltreenode}{\mathsf {treeNode}}
\newcommand{\ltreeedge}{\mathsf {treeEdge}}
\newcommand{\lgrid}{\mathsf {grid}}
\newcommand{\lvgrid}{\mathsf {vGrid}}
\newcommand{\ltreelike}{\mathsf {tree}}
\newcommand{\lpointer}{\mathsf {pointer}}
\newcommand{\lbadtree}{\mathsf {badTree}}
\newcommand{\lbadgraph}{\mathsf {badGraph}}
\newcommand{\lyes}{\mathsf {yes}}
\newcommand{\lno}{\mathsf {no}}
\newcommand{\lhard}{\mathsf {hard}}
\newcommand{\fvaluegrid}{\mathsf{val}_G}
\newcommand{\fvaluetree}{\mathsf{val}_T}
\newcommand{\inp}{\mathrm{input}}
\newcommand{\outp}{\mathrm{output}}
\DeclarePairedDelimiter{\abs}{\lvert}{\rvert}
\newtcolorbox{myframe}[2][]{%
	breakable,enhanced,colback=white,colframe=black,coltitle=black,
	sharp corners,boxrule=0.4pt,
	fonttitle=\itshape,
	attach boxed title to top left={yshift=-0.3\baselineskip-0.4pt,xshift=2mm},
	boxed title style={tile,size=minimal,left=0.5mm,right=0.5mm,
		colback=white,before upper=\strut},
	title=#2,#1
}
\newenvironment{myabstract}%
{\list{}{\listparindent 1.5em
        \itemindent    \listparindent
        \leftmargin    1cm
        \rightmargin   1cm
        \parsep        0pt}%
    \item\relax}%
{\endlist}
\newenvironment{mycover}%
{\list{}{\listparindent 0pt
        \itemindent    \listparindent
        \leftmargin    1cm
        \rightmargin   1cm
        \parsep        0pt}%
    \raggedright
    \item\relax}%
{\endlist}
\newcommand{\myaff}[1]{\,$\cdot$\, {\small #1}\par\smallskip}
\begin{document}

\begin{mycover}
{\huge\bfseries Shared Randomness Helps with \\ Local Distributed Problems \par}
\bigskip
\bigskip

\ifanon
\textbf{Anonymous authors}
\else

\textbf{Alkida Balliu}
\myaff{Gran Sasso Science Institute}

\textbf{Mohsen Ghaffari}
\myaff{MIT}

\textbf{Fabian Kuhn}
\myaff{University of Freiburg}

\textbf{Augusto Modanese}
\myaff{Aalto University}

\textbf{Dennis Olivetti}
\myaff{Gran Sasso Science Institute}

\textbf{Mikaël Rabie}
\myaff{Université Paris Cité, CNRS, IRIF}

\textbf{Jukka Suomela}
\myaff{Aalto University}

\textbf{Jara Uitto}
\myaff{Aalto University}
\fi 

\bigskip
\end{mycover}

\begin{myabstract}
\noindent\textbf{Abstract.}
By prior work, we have many wonderful results related to distributed graph algorithms for problems that can be defined with local constraints; the formal framework used in prior work is \emph{locally checkable labeling problems} (LCLs), introduced by Naor and Stockmeyer in the 1990s. It is known, for example, that if we have a deterministic algorithm that solves an LCL in $o(\log n)$ rounds, we can speed it up to $O(\log^* n)$ rounds, and if we have a randomized algorithm that solves an LCL in $O(\log^* n)$ rounds, we can derandomize it for free.

It is also known that randomness helps with some LCL problems: there are LCL problems with randomized complexity $\Theta(\log \log n)$ and deterministic complexity $\Theta(\log n)$. However, so far there have not been any LCL problems in which the use of \emph{shared randomness} has been necessary; in all prior algorithms it has been enough that the nodes have access to their own private sources of randomness.

Could it be the case that shared randomness never helps with LCLs? Could we have a general technique that takes any distributed graph algorithm for any LCL that uses shared randomness, and turns it into an equally fast algorithm where private randomness is enough?

In this work we show that the answer is \emph{no}. We present an LCL problem $\Pi$ such that the round complexity of $\Pi$ is $\Omega(\sqrt{n})$ in the usual randomized \local model (with private randomness), but if the nodes have access to a source of shared randomness, then the complexity drops to $O(\log n)$.

As corollaries, we also resolve several other open questions related to the landscape of distributed computing in the context of LCL problems. In particular, problem $\Pi$ demonstrates that distributed \emph{quantum} algorithms for LCL problems strictly benefit from a shared quantum state. Problem $\Pi$ also gives a separation between \emph{finitely dependent distributions} and \emph{non-signaling distributions}.
\end{myabstract}

\thispagestyle{empty}
\setcounter{page}{0}
\clearpage

\section{Introduction}

In this work we present a graph problem that is solely defined with local constraints, yet distributed algorithms for solving it benefit from shared randomness.
More formally, we present a locally checkable labeling problem (LCL) $\Pi$ such
that any randomized distributed algorithm that solves $\Pi$ in the usual \local
model of distributed computing requires $\Omega(\sqrt{n})$ communication rounds,
but if we have access to shared randomness, then we can exponentially improve the
round complexity, down to $O(\log n)$ rounds.

\paragraph{\boldmath Context: LCL problems and the \local model.}

LCL problems were originally introduced by Naor and Stockmeyer
\cite{naor-stockmeyer1995} in the 1990s, and in the recent years they have
formed one of the cornerstones of the modern theory of distributed graph
algorithms. An LCL problem is simply a graph problem that can be specified by
giving a finite set of valid labeled neighborhoods. For example, the task of
coloring vertices with 10 colors in graphs of maximum degree at most 20 is an
LCL problem. (It can be defined by listing all possible radius-1 neighborhoods
of degree at most 20, and by listing for each of them all valid 10-colorings.)
Numerous problems that have been studied in distributed graph algorithms over
the decades are LCL problems (at least when restricted to bounded-degree
graphs); examples include maximal independent sets, maximal matching, various
problems related to vertex and edge coloring, and various tasks related to
orienting edges or partitioning of edges subject to local constraints. In fact 
even 3SAT can be interpreted as an LCL problem (with the bounded-degree
assumption corresponding to the case in which each variable occurs in a bounded
number of clauses).

While LCL problems are meaningful in any model of computing, they have been studied in particular from the perspective of distributed graph algorithms, and the most prominent model there is the \local model of computing \cite{linial92,peleg00distributed}. In brief, an algorithm $A$ with running time $T$ in the \local model is simply a function that maps radius-$T$ neighborhoods to local outputs. That is, to apply $A$ in a given graph $G$, each node $v$ looks at all information in its radius-$T$ neighborhood and uses $A$ to determine its own local output (for example, the color of $v$ if the task is to find a graph coloring). It turns out that we could also equivalently interpret $G$ as a computer network, and then $A$ can be interpreted as a distributed message-passing algorithm in which all nodes stop after $T$ synchronous communication rounds. We will hence interchangeably refer to $T$ as the running time, locality, or round complexity of~$A$.

A comprehensive theory of LCL problems in the \local model has been developed in
the past 10 years. 
There are numerous theorems that apply to all LCL problems, or all LCL problems in some graph family, such as trees or grids \cite{balliu18lcl-complexity,balliu20almost-global,doi:10.1137/17M1157957,dist_deran,fischer_ghaffari2017sublogarithmic,Rozhon2019,brandt16lll,chang16exponential,ghaffari17distributed,ghaffari19degree-splitting,balliu19lcl-decidability,balliu21mm,how-much-does-randomness-help-lcls,brandt17grid-lcl,balliu22rooted-trees,balliu22regular-trees,balliu22regular-trees,chang23automata-theoretic}. To give some flavor of the power of these results, here is one example: if there is a \emph{randomized} \local algorithm $A$ that solves some LCL problem $\Pi$ in $T(n) = o(\log \log n)$ rounds in $n$-node graphs, we can also construct a \emph{deterministic} \local algorithm $A'$ that solves the same problem $\Pi$ in $T'(n) = O(\log^* n)$ rounds \cite{chang16exponential}. That is, we can for free derandomize algorithms and speed them up.

In general, the relation between randomized and deterministic algorithms in the
context of LCL problems is now well understood, and we also have a clear view of
the landscape of all possible round complexities that we may have for LCL
problems \cite{suomela-2020-landscape}. However, more care is needed here:
what exactly do we mean by \emph{randomized} \local algorithms?

\paragraph{Question: shared vs.\ private randomness.}

Essentially all work on LCLs in the \randlcl model assumes that each node has its own \emph{private} source of randomness. More precisely, nodes are initially labeled independently and uniformly at random with strings of bits, and then a $T$-round algorithm can make use of all such bit strings within radius~$T$.

However, there is another notion of randomized algorithms that has been studied
for instance in the context of communication complexity: \emph{shared}
randomness (e.g.,
\cite{kurri21_coordination_ieeetit,acharya19_communication_icml}).
That is, there is one global random bit string that all nodes can see.
There are many contexts in which access to shared randomness helps
\cite{rao20_communication_book,crescenzi19_trade_disc,montealegre20_shared_isaac},
but does it help with any LCL problem?

Prior to this work, there was no evidence that shared randomness might help with LCL problems. On the contrary, all numerous LCL problems that we have encountered in prior work seem to be such that either (1) randomness does not help at all, or (2) randomness helps but private randomness is sufficient. There have even been systematic studies of infinite families of LCL problems \cite{balliu20binary-labeling,chang23automata-theoretic}, as well as computer-assisted explorations of the space of LCL problems \cite{tereshchenko21thesis}, yet there is no known candidate problem that might benefit from shared randomness. Intuitively, the key obstacle seems to be the combination that LCL problems are defined using local constraints and the set of input and output labels is finite. Shared randomness could be used to e.g.\ select a globally consistent random label from the set of finite output labels, but if that succeeds w.h.p.\ in arbitrarily large graphs, there also has to exist a deterministic choice that succeeds.

Hence, for all that we know, we might very well be living in a world in which the following conjecture is true: if an LCL problem $\Pi$ can be solved in $T(n)$ rounds with the help of shared randomness, it can also be solved in $O(T(n))$ rounds with only private randomness.

Were this to be true, it would considerably simplify the landscape of models, as
discussed further below. It would also give a helpful algorithm design
tool: we could design algorithms that exploit shared randomness, and then for
free turn them into genuine distributed algorithms that only use private
randomness. Conversely, it would allow us to strengthen all existing lower
bounds that hold for private randomness into lower bounds that extend all the
way to shared randomness.

What we show in this work is that this result cannot be true. Indeed, conversion from shared to private randomness for some LCL problems may lead to an \emph{exponential} increase in the round complexity.

\paragraph{Main contribution.}

In this work we present an LCL problem $\Pi$ such that the round complexity of $\Pi$ is $\Omega(\sqrt{n})$ in the usual randomized \local model (with private randomness), but if the nodes have access to a source of shared randomness, then the complexity drops to $O(\log n)$. This is the first known LCL that separates these two models.

Our problem $\Pi$ is an LCL exactly in the strict sense originally defined by Naor and Stockmeyer \cite{naor-stockmeyer1995}, and we do not exploit any promise on the graph family or input. Being promise-free is important, as the entire theory of LCL problems is fundamentally promise-free (for example, the known gap results would disappear if we can have arbitrary promises on the input structure), and hence also any interesting separations or counterexamples have to be promise-free.

We refer to \cref{th:private-rand,thm:ub-shared-rand} for the formal theorem statements of our lower bound and upper bound.

\paragraph{Corollary 1: distributed quantum computing.}

One of the major open questions at the intersection of distributed computing and
quantum information theory is which distributed problems admit quantum
advantage. A key model for studying this question is the \qlocal model, which is
essentially what one gets if we imagine that nodes of the input graph are
quantum computers and communication channels can be used to exchange qubits. It
is known that there are some (artificial) graph problems that can be solved in
constant time in \qlocal yet for which classical \local requires linear time
\cite{legall19_quantum_stacs}.
Nevertheless, it is still wide open whether there is any \emph{LCL} problem that
admits distributed quantum advantage; see, for instance,
\cite{akbari24_online_arxiv,coiteux-roy24quantum-coloring}.

So far, there have been two major variants of \qlocal that have been studied in
the literature: \qlocal with a shared quantum state (i.e., nodes are configured
in advance and share entangled qubits) and \qlocal without any shared quantum
state \cite{akbari24_online_arxiv,gavoille2009,arfaoui2014}.
It was not known whether either of these is (strictly) stronger than \randlcl
for any LCL problem. 
There are no known examples of LCL problems that would potentially benefit from
the shared quantum state, and it seemed reasonable to conjecture that, even if
\qlocal turns out to be is stronger than \randlcl, the shared quantum state does
not give it any additional power. 
Indeed, there were (unsuccessful) attempts at unifying the two variants of
\qlocal.

\begin{figure}[t]
	\centering
	\includegraphics[width=\textwidth]{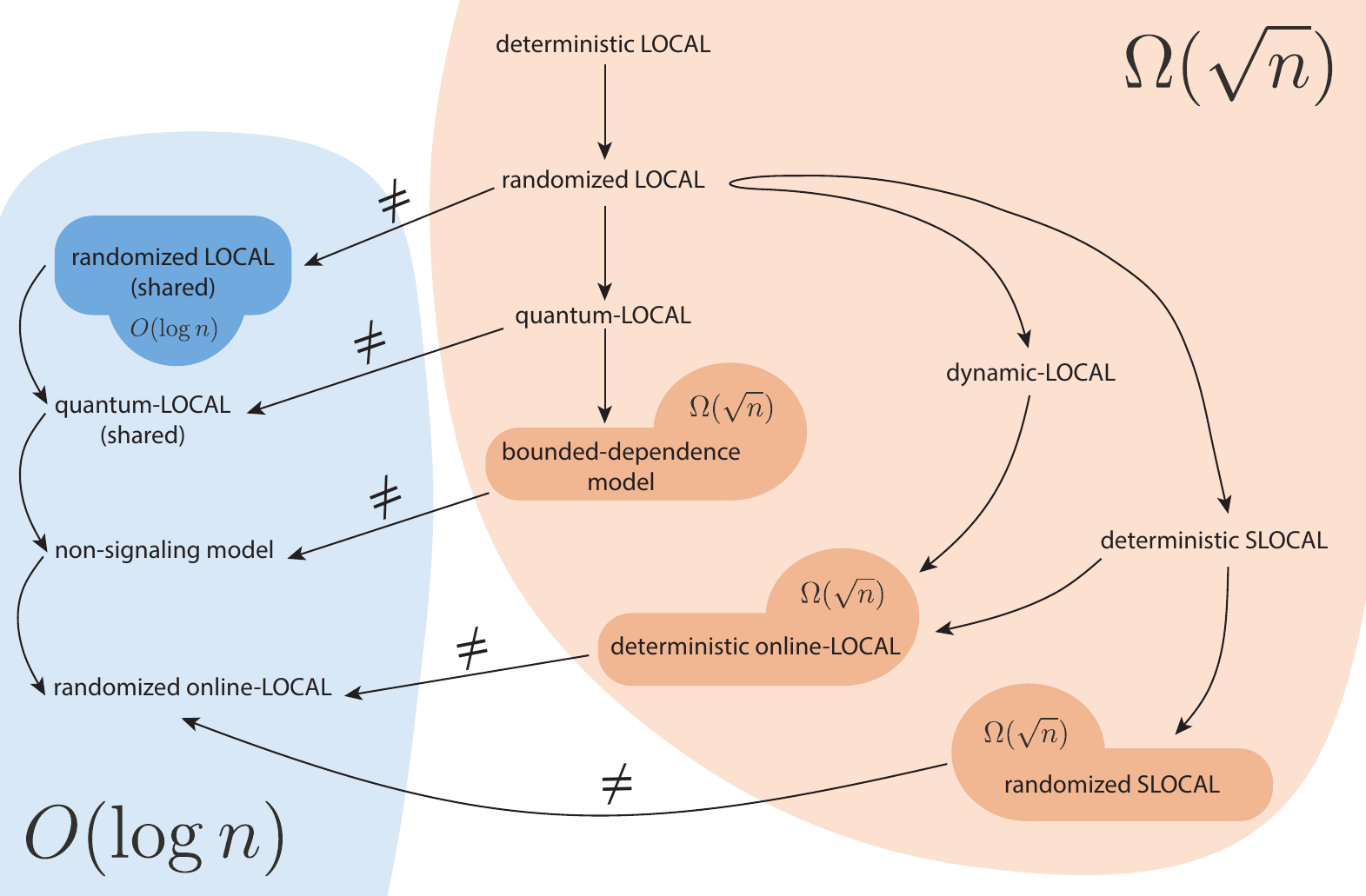}
	\caption{Landscape of models and the new separations between them. The general structure of the landscape is from \cite{akbari24_online_arxiv}. In this work we present a new LCL problem $\Pi$ that is easy in the blue-shaded region (models in which we have access to shared randomness), but hard in the red-shaded region (all other models), and we will get separations for all pairs of models that cross the cut. To prove these results, we give an upper bound in \randlcl with shared randomness (\cref{thm:ub-shared-rand}), and all upper bounds in the blue region follow, and we give lower bounds in randomized \slocal (\cref{thm:lb-slocal}), \detolcl (\cref{thm:lb-detolcl}), and the \boundep (\cref{thm:lb-boundep}), and all lower bounds in the red region follow.}
  	\label{fig:landscape}
\end{figure}

One unexpected corollary of our work is that the two variants of \qlocal are
indeed distinct, though for mundane reasons that have little to do with quantum
physics. It simply happens to be the case that a shared quantum state gives the
nodes access to shared randomness. As we show, the problem that we construct in
this work is hard in \qlocal without shared quantum state, but it becomes easy
in \qlocal with shared quantum state (since it is easy already in \randlcl with
shared randomness).

Hence, the entire question was wrong: shared quantum state does help, but for the wrong reasons. The present work highlights that the right question is whether shared quantum state provided further advantage beyond shared randomness.

\paragraph{Corollary 2: finitely dependent vs.\ non-signaling distributions.}

There is a line of research in mathematics that aims at capturing which problems
admit \emph{finitely-dependent distributions}
\cite{holroyd2016,holroyd2018,holroyd2023symmetrization,timar2024}; these are
distributions over nodes such that their restriction to a
set $X$ of nodes is independent of their restriction to another set $Y$ of
nodes if the shortest-path distance between $X$ and $Y$ is greater than some
constant. For example, the output distribution of any constant-time \randlcl
algorithm is a finitely-dependent distribution. A key question in this
context has been whether finitely-dependent distributions are strictly stronger
than constant-time \randlcl algorithms, which indeed is the case
\cite{holroyd2016}. A natural generalization of finitely-dependent distributions
to arbitrary (not necessarily constant) distance is called a
\emph{bounded-dependence distribution} \cite{akbari24_online_arxiv}.

Another closely related definition arises from quantum information theory and
the study of distributed quantum advantage: \emph{non-signaling distributions}
\cite{akbari24_online_arxiv,gavoille2009,arfaoui2014,coiteux-roy24quantum-coloring}.
Informally, a family of output distributions is non-signaling with locality $T$
if the distribution restricted to some set of nodes $X$ does not change if we
modify the input graph more than $T$ hops away from~$X$. A bounded-dependence
distribution with locality $T$ is non-signaling with locality $O(T)$, but the
converse is not necessarily true.

Prior to this work, there were no known examples of LCL problems that admit a
non-signaling distribution with locality $T$ but do not admit a
bounded-dependence distribution with locality $O(T)$. Indeed, it was again
reasonable to conjecture that no such problem exists. Our construction gives a
separation also between these two models.

\paragraph{Other corollaries.}

Our construction also gives an exponential separation between deterministic and
randomized versions of the \onlinelocal model (see
\cite{akbari24_online_arxiv,akbari23_locality_icalp}). Previously, there were no
known examples of LCLs that separate these models. For our problem we can prove
a lower bound in the \detolcl model, and the upper bound for \randlcl with
shared randomness directly works also in randomized \onlinelocal.

\paragraph{The big picture.}

All of our results are summarized in \cref{fig:landscape}. All separations between the two regions are new, in the sense that there was previously no (promise-free) LCL that would separate these pairs of models. The results that lead to this landscape are:
\begin{itemize}[noitemsep]
	\item \cref{thm:ub-shared-rand}: an upper bound in \randlcl with shared randomness,
	\item \cref{thm:lb-slocal}: a lower bound in randomized \slocal,
	\item \cref{thm:lb-detolcl}: a lower bound in \detolcl,
	\item \cref{thm:lb-boundep}: a lower bound in \boundep.
\end{itemize}
However, to keep this work easy to follow also for those who are not interested in models beyond \randlcl, we also prove the following result that is technically redundant, but serves as a warm-up for the other results:
\begin{itemize}[noitemsep]
	\item \cref{th:private-rand}: a lower bound in \randlcl without shared randomness.
\end{itemize}

\paragraph{Open questions.}

We conjecture that our problem $\Pi$ also exhibits a doubly-exponential separation between the \randlcl model and the \emph{massively parallel computing} (MPC) model \cite{KarloffSV10}. More precisely, we conjecture that our problem $\Pi$ can be solved in $O(\log \log n)$ rounds in the MPC model, while it is known to require $\Omega(\sqrt{n})$ rounds in \randlcl. Proving this is deferred for future work.

Our problem $\Pi$ fundamentally exploits the existence of short cycles (in the sense that the problem is trivial in trees and interesting only in graphs with short cycles). A key open question is \emph{whether shared randomness helps with any LCL in trees}. We have preliminary evidence suggesting that shared randomness never helps in rooted regular trees, but the case of general trees remains open.

\section{Definitions}

\paragraph{Labeled graphs.}
We start by defining the notion of labeled graph.
\begin{definition}[Labeled graph]
	Let $\mathcal{V}$ and $\mathcal{E}$ be sets of labels. A graph $G = (V,E)$ is called $(\mathcal{V},\mathcal{E})$-\emph{labeled} if:
	\begin{itemize}[noitemsep]
		\item Each node $u \in V$ is assigned a label from $\mathcal{V}$;
		\item Each node-edge pair $(u,e) \in V \times E$, satisfying $u \in e$, is
		assigned a label from $\mathcal{E}$. 
	\end{itemize} 
	A node-edge pair $(u,e)$ that satisfies $u \in e$ is also called \emph{half-edge}  incident to $u$.
\end{definition}

\begin{definition}[Labeled graph satisfying some constraints]
	Let $G$ be a graph, and let $\mathcal{C}$ be a set of constraints over the labels
	$\mathcal{V}$ and $\mathcal{E}$. The graph $G$ satisfies $\mathcal{C}$ if and
	only if:
	\begin{itemize}[noitemsep]
		\item $G$ is $(\mathcal{V},\mathcal{E})$-labeled, and
		\item the constraints of $\mathcal{C}$ are satisfied over all nodes of $G$.
	\end{itemize}
\end{definition}

\begin{definition}[Locally checkable labeleling (LCL) problem]
	A \emph{locally checkable labeling} (LCL) problem $\Pi$ is defined by a tuple
	$(\mathcal{V}_\inp, \mathcal{E}_\inp, \mathcal{V}_\outp, \mathcal{E}_\outp,
	\mathcal{C})$ where $\mathcal{C}$ is a set of constraints over
	$\mathcal{V}_\outp$ and $\mathcal{E}_\outp$.
	Given a $(\mathcal{V}_\inp,\mathcal{E}_\inp)$-labeled graph $G$, one is asked
	to label $G$ so that it satisfies $\mathcal{C}$.
\end{definition}

We denote with $L_u(e)$ the label on the half-edge $(u,e)$. Something that will
be very useful throughout the paper is to define a way to denote the node that
we can reach from a node $u$ by following some specific chain of labels assigned
to half-edges. Let $G = (V,E)$ be $(\Sigma_V,\Sigma_E)$-labeled. Let $L_1, L_2,
\ldots, L_k$ be labels in $\Sigma_E$. We define a function $f(u, L_1, L_2,
\ldots, L_k)$ that takes as input a node $u \in V$ and labels $L_1, \ldots, L_k$
in $\Sigma_E$, and returns the node $v$ reachable from $u$ by following the
unique path whose edges are labeled with $L_1, \ldots, L_k$ (in this order); if
there is no such path or it is not unique, then the value of $f$ is undefined.
More precisely, let $P = (v_1, v_2, \ldots, v_{k+1})$ be a path that starts at
$v_1=u$ and such that, for any edge $e=\{v_i,v_{i+1}\}$, the half-edge
$(v_i,e)$ is labeled with $L_{v_i}(e)=L_i$. Then 
\[
	f(u, L_1, L_2, \ldots, L_k)=
	\begin{cases}
		v_{k+1}, &\text{if $P$ exists and is unique} \\
		\bot, &\text{otherwise.}
	\end{cases}
\]

\paragraph{\boldmath The \local model.}
In the \local model of computing, we imagine that nodes of a graph $G = (V,E)$
are computers with access to unbounded computational resources (time and space).
Each computer is assigned a unique identifier from the set
$\{1,\dots,\abs{V}^c\}$, where $c \ge 1$ is some constant.
The communication between computers is as defined by $E$.
The computation proceeds in rounds where, in each round, each computer exchanges
messages of unbounded size with their neighbors and performs some local
computation (which we may perceive as instantaneous).

The above gives the deterministic variant of \local.
The randomized variant (with private randomness) is the same but where we also
give each node access to an infinite string of random bits (that is guaranteed
to be independent of the strings of other nodes in the network).
In the variant with shared randomness, nodes are given simultaneous access to
the \emph{same} infinite string.

\section{High-level ideas}
The main ingredient of our work is an LCL problem $\Pi$ with some desirable properties. On a high-level, this problem is promise-free, in the sense that it is defined on any graph. However, it is defined in such a way that there exists a family of graphs $\mathcal{G}$ that we call \emph{hard instances}:
\begin{itemize}
	\item If the graph $G$ in which the algorithm solving $\Pi$ is run is not in $\mathcal{G}$, the LCL is defined in such a way that the nodes of $G$ can produce a locally checkable proof of this fact. Moreover, such a proof can be computed ``fast''.
	\item However, if $G \in \mathcal{G}$, the LCL forces the nodes to solve a
	problem that can be solved ``fast'' if they have access to shared random bits,
	whereas any algorithm working without shared randomness must be ``slow''.
\end{itemize}
Here ``fast'' and ``slow'' depend on the precise model considered. 
For example, in the case of the \local model, the problem requires just $O(\log n)$ rounds with shared randomness, but $\Omega(\sqrt{n})$ rounds without it.

In the following, we start by providing an overview of the structure of hard
instances. Then we explain what the problem $\Pi$ is on a hard instance, and
later we will explain how the problem is defined to be promise-free.

\paragraph{The hard instances.}
Consider the graph depicted in \Cref{fig:hard-instance}. It is composed of a
square grid, where on top of each column we place a tree-like structure.
This kind of graph has a couple of useful properties:
\begin{itemize}
	\item For any two nodes in the grid belonging to the same column, it holds that their distance is $O(\log n)$.
	\item As we will see, this structure can be certified. That is, it is possible
	to provide a constant-sized certificate to the nodes such that, if the
	certificate looks good everywhere, then the graph is indeed a hard instance.
	Conversely, if the graph is not a hard instance, then any assignment of the
	certificate leads to an error somewhere. We will make use of this fact later
	when making the problem promise-free.
\end{itemize}
\begin{figure}
	\centering
	\includegraphics[width=\textwidth]{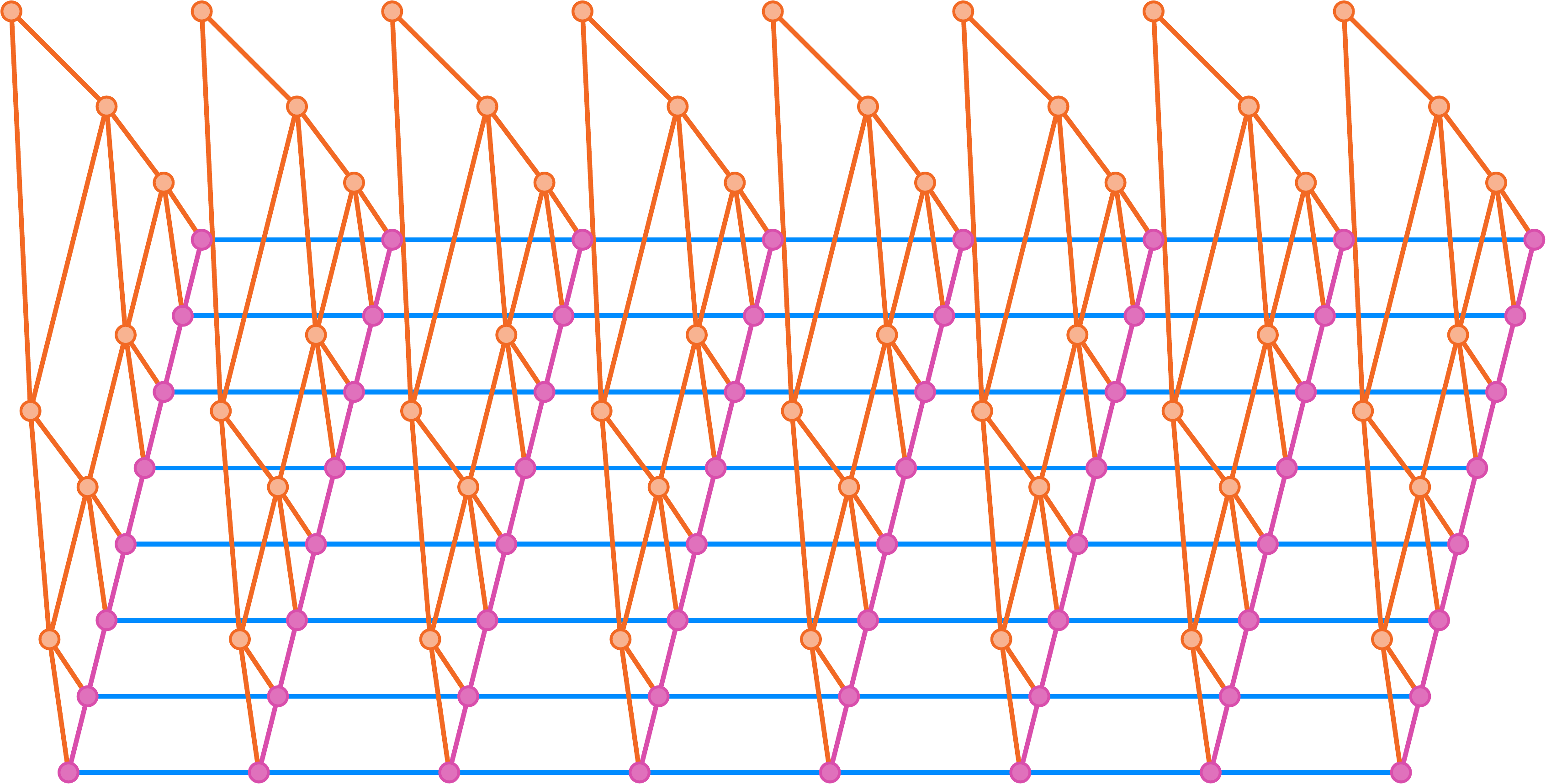}
	\caption{An example of a hard instance. The grid is composed of blue edges, purple edges, and purple nodes. Each connected component induced by orange nodes, orange edges, purple edges, and purple nodes connected to the orange edges is a tree-like structure.}\label{fig:hard-instance}
\end{figure}

\paragraph{\boldmath The LCL problem $\Pi^{\lhard}$ defined on hard instances.}
Consider the following problem $\Pi^{\lhard}$, defined on hard instances. Each
node in the right-most column of the grid receives a bit as input. Then, all
nodes of the grid must output a bit such that the two following constraints are
satisfied:
\begin{enumerate}
	\item[C1:] For each row of the grid, all nodes must output the same bit.
	\item[C2:] There must exist at least one row such that the output bit is the
	same as the input bit of the right-most node of that row.
\end{enumerate} 
It is possible to present this problem as an LCL as follows:
\begin{itemize}
	\item Assume that the grid has an input labeling encoding its orientation;
	that is, each node knows which of its neighbors is on its left, right,
	above, and below. Then constraint C1 can be encoded in the LCL by requiring
	every node to have the same output as its left and right neighbors.
	\item In order to enforce constraint C2, we use the tree-like structure on top
	of the last column. Say a grid node of the last column is \emph{happy} if its
	output agrees with its input. Meanwhile an inner node of the tree is happy if
	at least one of their children is happy. All nodes of the tree output whether
	they are happy, and we require the root to be happy. These constraints are
	local, and in fact they can be encoded as an LCL.
\end{itemize}
An example of a valid output is shown in \Cref{fig:solution}.

\begin{figure}[t]
	\centering
	\includegraphics[width=0.6\textwidth]{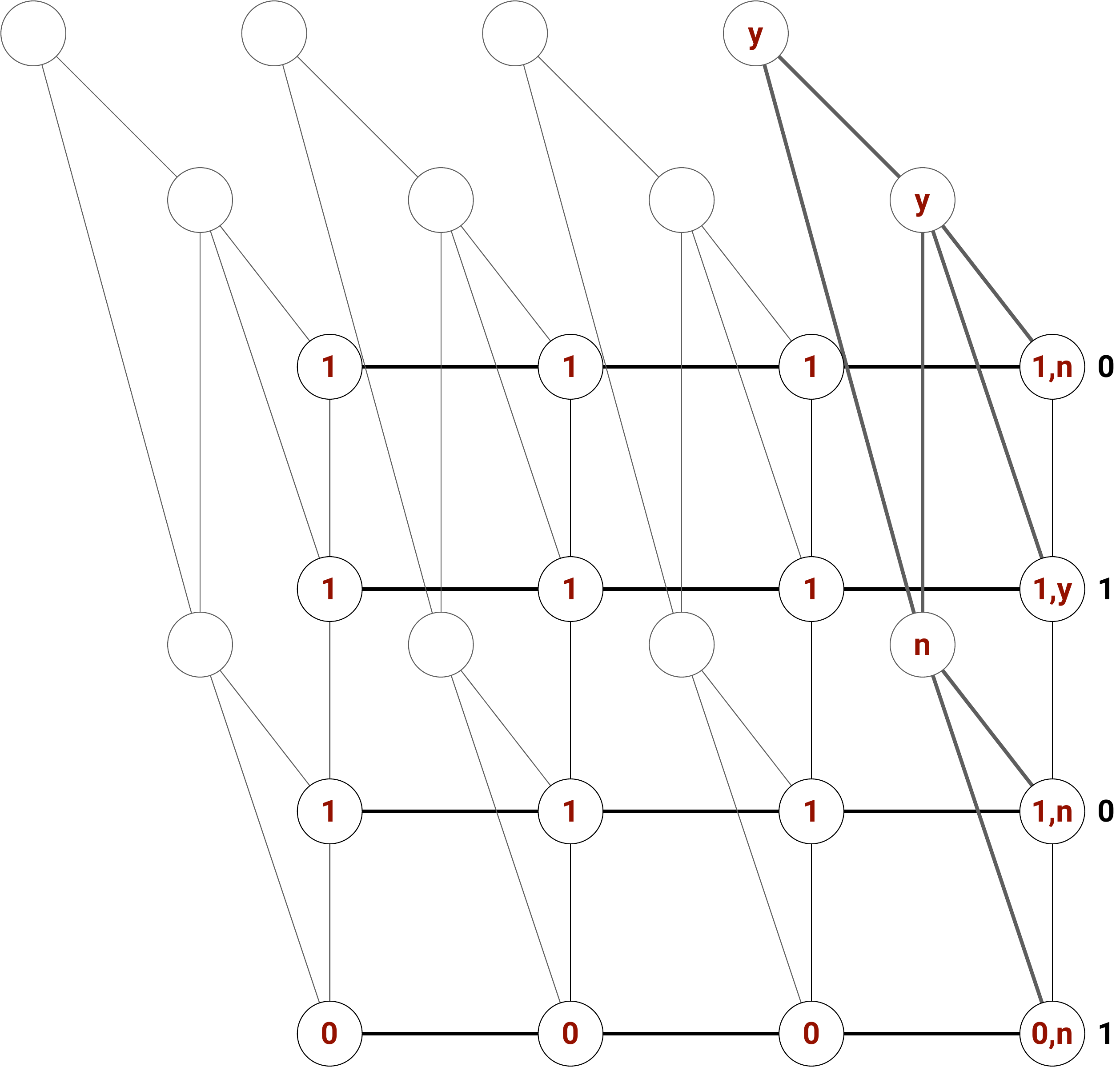}
	\caption{An example of a solution for $\Pi^{\lhard}$. Black bits represent the inputs of the nodes of the last column. The inputs of the other nodes do not affect the solution and are omitted. Labels in red represent the outputs, where the label $y$ represents a happy node, and the label $n$ represents an unhappy node. All nodes that are not labeled either $y$ or $n$ output $y$, which is omitted in the figure.}\label{fig:solution}
\end{figure}

Now, given such a problem, we can obtain a separation between shared and private
randomness in the \local model: 
\begin{itemize}
	\item If the nodes have access to shared randomness, then each node only needs
	$O(\log n)$ rounds to determine its vertical position $y$ in the grid.
	Once it has figured out the value of $y$, the node simply outputs the $y$th
	shared random bit.
	In this way, all nodes in the same row output the same bit, and for each row
	with probability $1/2$ we have that this bit agrees with the input given to
	the node on the right-most column. 
	Since there are $\Theta(\sqrt{n})$ rows, we get that there is at least one
	good row with high probability.
	\item Conversely, if only private randomness is allowed, nodes in the same row
	do not have any way to coordinate their output and communication across the
	whole row is expensive ($\Omega(\sqrt{n})$ rounds).
	The only alternative is for nodes in the same row to deterministically fix
	their output as a function of their vertical position.
	In this case we can adversarially pick the input to the rightmost node so that
	the row does not succeed.
	Since this cannot hold for every row (as otherwise the algorithm would not be
	solving $\Pi$), we get the lower bound of $\Omega(\sqrt{n})$ rounds.
\end{itemize}

Although the idea of the problem is simple, the main challenge is making the
problem promise-free. 
That is, we also have to account for all the cases where the input graph is not
as in \Cref{fig:hard-instance}.
Next we provide an overview of this process.

\subsection{A tree-like structure}\label{ssec:roadmap:treelike}
A useful property that our problem satisfies is the following: nodes
that belong to the same column of the grid should either be able to see the whole
column by inspecting their $O(\log n)$-radius neighborhood, or the nodes can
\emph{prove that there is some error} within distance $O(\log n)$. In
\Cref{sec:treelike}, we describe \emph{tree-like structures} and how they give
us exactly this property. This kind of structure has already been used in
\cite{congest-lcls}; in fact some useful properties that we exploit here have
already been proved in \cite{congest-lcls}. 

\paragraph{Definition of the tree-like structure.}
Informally, a tree-like
structure is a perfect binary tree in which nodes at the same depth are also
connected via a path. Such a structure can be certified by assigning a label to
each node-edge pair.
An example of this structure as well as an assignment of a certificate for it
are depicted in \Cref{fig:treelike-labeled}. For example, the
certificate ensures that all leaves are at the same depth by requiring
that, starting from a node not having any incident edge labeled $\llch$ or
$\lrch$ (i.e., it does not have any children), and following the edge labeled
$\lright$, we must reach a node that also does not have any children.

\begin{figure}
	\centering
	\includegraphics[width=0.7\textwidth]{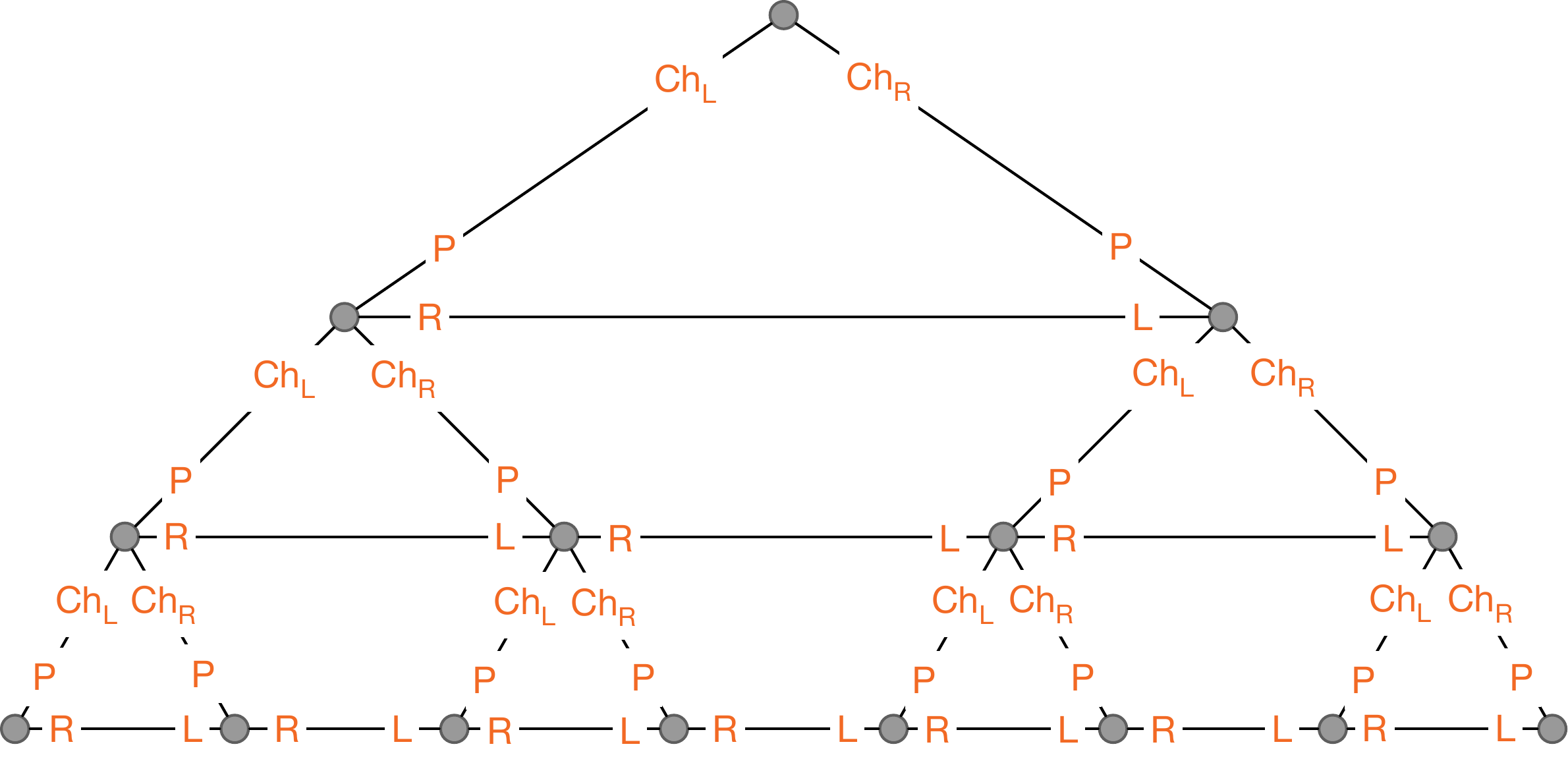}
	\caption{An example of a properly labeled tree-like structure.
	The labels $\lleft$, $\lright$, $\lparent$, $\llch$, and $\lrch$, stand, respectively, for left, right, parent, left child, and right child.
	}\label{fig:treelike-labeled}
\end{figure}

\paragraph{Local certification of tree-like structures.}
In \Cref{sec:treelike}, we start by describing tree-like structures from a
local perspective. In particular, we define a set of input half-edge labels
$\mathcal{E}^{\ltreelike}$ of constant size and a set of constraints
$\mathcal{C}^{\ltreelike}$ over constant distance that satisfy the following:
\begin{itemize}
	\item For all tree-like structures $G$, there exists an assignment of labels of $\mathcal{E}^{\ltreelike}$ to the half-edges of the graph such that the constraints of $\mathcal{C}^{\ltreelike}$ are satisfied on all nodes of $G$.
	\item Let $G$ be a graph where half-edges are labeled with labels from
	$\mathcal{E}^{\ltreelike}$ and such that the constraints of
	$\mathcal{C}^{\ltreelike}$ are satisfied on all nodes of $G$. Then, $G$ is a
	tree-like structure.
\end{itemize}
In other words, we show that there exists a locally checkable proof of constant size for the fact that the graph is tree-like.

\paragraph{The tree-like structure as an LCL.}
Building on the previously defined locally checkable proof, we define an LCL problem $\Pi^{\lbadtree}$ that satisfies the following:
\begin{itemize}
	\item For all tree-like graphs $G$, there exists an input for $\Pi^{\lbadtree}$ that can be assigned to $G$ such that the only valid solution for $\Pi^{\lbadtree}$ is the one assigning $\bot$ to all nodes of $G$.
	\item Let $G$ be a graph where half-edges are labeled with labels from $\mathcal{E}^{\ltreelike}$ such that the constraints of $\mathcal{C}^{\ltreelike}$ are not satisfied on at least one node of $G$. Then there exists a solution for $\Pi^{\lbadtree}$ where all nodes produce an output different from $\bot$. Moreover, such a solution can be computed in $O(\log n)$ rounds in the \local model. 
\end{itemize}
In other words, we define an LCL problem where the inputs are from
$\mathcal{E}^{\ltreelike}$ and where the nodes have two options: they can either
prove that the graph is not tree-like, or they can do nothing (by outputting
$\bot$). This problem is defined in such a way that an output different from
$\bot$ can be used only on structures that are not tree-like (or on tree-like
structures whose input labels are incorrect) whereas, if an output different
from $\bot$ can be used, then this can be done relatively fast ($O(\log
n)$ rounds in the \local model).

\paragraph{\boldmath How we will use $\Pi^{\lbadtree}$.}
We now describe how the problem $\Pi^{\lbadtree}$ is used when defining our main
LCL problem $\Pi$. The problem $\Pi$ is defined in such a way that all nodes
receive an input indicating whether they are part of a grid, or whether they are
part of a tree-like structure. Then $\Pi$ is defined so that:
\begin{itemize}
	\item Nodes can \emph{mark} areas of the graph that do not look like valid hard instances. In particular, nodes having an input indicating that they are part of a tree-like structure are considered marked if they solve $\Pi^{\lbadtree}$ by giving an output different from $\bot$.
	\item We prove that nodes can efficiently mark bad parts of the graph
	such that the remaining connected components \emph{almost} look like hard
	instances. Although these remaining parts do not exactly look like hard
	instances (we provide more details about this later), there is an efficient
	algorithm for solving them.
\end{itemize}

\subsection{A grid structure}\label{ssec:roadmap:grid}
As already mentioned, our hard instances are grids in which we connect a
tree-like structure on top of each column. In \Cref{sec:grid}, we describe a
structure that we call \emph{grid structure}.
A similar structure has already been used in \cite{congest-lcls} and in fact
some useful properties follow directly from what was proved in that paper.

\paragraph{Definition of the grid structure.}
Informally, a grid structure denotes a two-dimensional grid that does not
wrap around (i.e., it is not a torus). Unlike the case of tree-like
structures, we cannot achieve full certification of grids; in particular our
scheme for certifying grids is defective and also allows one to label invalid
grid structures (and in particular tori) in a way that no node sees any error. 

\Cref{fig:grid-labeled} illustrates a grid structure and its corresponding
certificate.
An example of a constraint that needs to be checked is that, if a node has
$\lright$ (i.e., ``right'') on an incident half-edge, then the corresponding
half-edge must be labeled $\lleft$ (i.e., ``left'').

\begin{figure}
	\centering
	\includegraphics[width=0.4\textwidth]{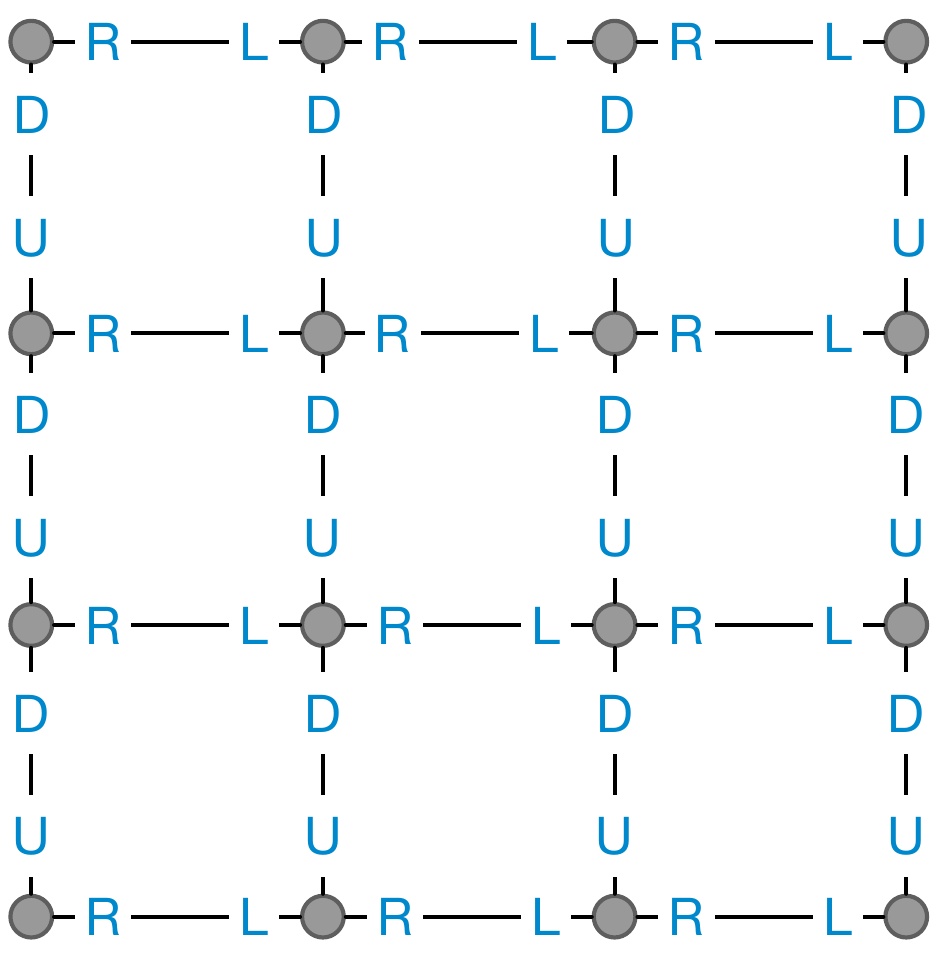}
	\caption{An example of a properly labeled grid structure. The labels $\lleft$, $\lright$, $\lup$, and $\ldown$ stand, respectively, for left, right, up, and down.}\label{fig:grid-labeled}
\end{figure}

\paragraph{Local certification of grid structures.}
In \Cref{sec:grid}, we describe grid structures from a local perspective.
In particular, we define a set of input half-edge labels $\mathcal{E}^{\lgrid}$
of constant size as well as a set of constraints $\mathcal{C}^{\lgrid}$ over constant
distance that satisfy the following:
\begin{itemize}
	\item For any grid structure $G$, there exists an assignment of labels of $\mathcal{E}^{\lgrid}$ to the half-edges of the graph such that the constraints of $\mathcal{C}^{\lgrid}$ are satisfied on all nodes of $G$.
	\item Let $G$ be a graph where half-edges are labeled with labels from
	$\mathcal{E}^{\lgrid}$ and such that the constraints of $\mathcal{C}^{\lgrid}$
	are satisfied on all nodes of $G$. 
	Moreover, suppose that there exists at least one node that has no incident half-edge labeled $\ldown$ (or $\lup$) and that there exists at least one node that has no incident half-edge labeled $\lleft$ (or $\lright$).
	Then $G$ is a grid structure.
\end{itemize}
In other words, if we assume that all nodes satisfy the constraints of $\mathcal{C}^{\lgrid}$ and we additionally assume that there is at least one node satisfying some additional constraints (which essentially guarantee that there is some ``corner'' of the grid, thus preventing the graph from being a torus), then the graph is indeed a grid.

\paragraph{Enforcing the dimensions of the grid.}
As previously discussed, when defining our main problem $\Pi$, we allow nodes to
\emph{mark} areas of the graph that do not look like valid hard instances.
Recall we cannot guarantee that unmarked areas are exactly hard instances.
Nevertheless, we are able to ensure that the unmarked parts of the graphs are
grids (with properly attached tree-like structures) that are at least as tall as
they are large. We call such grids \emph{vertical}. We later discuss how this
property is sufficient to obtain an efficient algorithm that has access to
shared randomness.

On a high-level, in order to enforce a grid to be vertical, we define a set of input node labels $\mathcal{V}^{\lvgrid}$, and a set of constraints $\mathcal{C}^{\lvgrid}$, that satisfy the following.
\begin{itemize}
	\item For any vertical grid structure $G$, there exists an assignment of labels of $\mathcal{E}^{\lgrid}$ to the half-edges of $G$ and an assignment of labels of $\mathcal{V}^{\lvgrid}$ to the nodes of $G$, such that the constraints of $\mathcal{C}^{\lgrid}$ and $\mathcal{C}^{\lvgrid}$  are satisfied on all nodes of $G$.
	\item Suppose $G$ is a graph where half-edges are labeled with labels from $\mathcal{E}^{\lgrid}$, nodes are labeled with labels from $\mathcal{V}^{\lvgrid}$, such that  the constraints of $\mathcal{C}^{\lgrid}$ and $\mathcal{C}^{\lvgrid}$  are satisfied on all nodes of $G$. Then, $G$ is a vertical grid structure.
\end{itemize} 
We note that, while the second point provides the intuition of what we will do, the statement, as is, is false. We will provide more details in \Cref{sec:grid}.

\paragraph{How we will use vertical grids.}
Consider a hard instance in which the grid is more large than tall, and in particular consider the extreme case in which the grid is just a path of linear length. In this case, in the case of shared randomness, if we apply the \local algorithm for solving $\Pi^{\lhard}$ described at the beginning of the section, we would have a success probability of just $1/2$, since there is only a single row in the grid. In order to guarantee a large-enough success probability, the problem $\Pi$ will be defined such that unmarked regions are \emph{vertical} grid structures (possibly of small size). This will guarantee the following.
\begin{itemize}
	\item If the grid has height less than $\log n$, then its width is also less than $\log n$, which implies that nodes can see the whole grid in just $O(\log n)$ rounds, and solve the problem $\Pi^{\lhard}$ by brute force.
	\item If the grid has height strictly larger than $\log n$, then the success probability will be at least $1 - 1/2^{\log n} = 1 - 1/n$, and hence the algorithm will succeed with high probability.
\end{itemize}
By combining the above, we will get that $O(\log n)$ rounds will be an upper bound on the runtime for succeeding in solving $\Pi^{\lhard}$ with high probability when using shared randomness.

\subsection{Family of hard instances}

In \Cref{sec:graph-family}, we formally define the family $\mathcal{G}$ of hard instances. These graphs are similar to the ones informally explained at the beginning of this section (see \Cref{fig:hard-instance} for an example), with the only difference that grids do not need to be squares, but they only need to satisfy that their height is at least as large as their width. Then, we define an LCL $\Pi^{\lbadgraph}$ satisfying the following.
\begin{itemize}
	\item There are two possible types of output, and different nodes could give outputs of different type. 
	\item One possible output is the empty output, and if a graph $G$ is in $\mathcal{G}$, the problem $\Pi^{\lbadgraph}$ is defined such that all nodes must produce the empty output.
	\item The other possible output is a proof for the fact that $G \notin \mathcal{G}$. More in detail,  if the graph is not in the family, nodes can spend $O(\log n)$ time in the \local model to produce a proof of this fact, such that, the subgraph $G'$ induced by nodes producing an empty output satisfies that each connected component of $G'$ is a graph in $\mathcal{G}$.
\end{itemize}
Informally, our main problem $\Pi$ will be defined such that all nodes need to solve $\Pi^{\lbadgraph}$, and then, on the subgraph  induced by nodes producing an empty output for $\Pi^{\lbadgraph}$, nodes need to solve the problem  $\Pi^{\lhard}$ that we informally defined on hard instances at the beginning of this section. The definition of $\Pi$ will satisfy that, if we consider some graph $G \in \mathcal{G}$, the only possible way to solve $\Pi$ is by producing an empty solution for $\Pi^{\lbadgraph}$, implying that nodes must then solve  $\Pi^{\lhard}$ on the whole graph. On the other hand, if a graph $G$ is not in $\mathcal{G}$, nodes can quickly (i.e., in $O(\log n)$ rounds in the \local model) mark bad parts of the graph, and then solve $\Pi^{\lhard}$ on the connected components that are part of $\mathcal{G}$.
In other words, the problem $\Pi^{\lbadgraph}$ is the one allowing us to remove the promise in the definition of $\Pi^{\lhard}$.

\paragraph{\boldmath High-level ideas behind the definition of $\Pi^{\lbadgraph}$.}
Observe that, by how hard-instances are constructed, nodes can either be exclusively part of a tree-like structure, or they can belong at the same time to the grid and to some tree-like structure. However, edges can be of three types: they can be exclusively part of the grid, exclusively part of a tree-like structure, or be part of both. In the problem $\Pi^{\lbadgraph}$, nodes and edges are input labeled to indicate to which structure(s) they belong to. Then, the possible outputs for $\Pi^{\lbadgraph}$ are the following.
\begin{itemize}
	\item If on a node the constraints of $\mathcal{C}^{\lvgrid}$ or the constraints of $\mathcal{C}^{\ltreelike}$ are not locally satisfied, or the input labeling of $\Pi^{\lbadgraph}$ is such that there is some local error in how the two structures are connected, then the node can output an error.
	\item Consider the subgraph obtained by excluding edges labeled as horizontal edges (i.e., $\lleft$ and $\lright$) of the grid. Each connected component is either a valid grid column with a properly-attached tree-like structure on top, or not. In the latter case, we also include the scenario in which some node in the connected component gave error in the previous step. In each connected component, nodes need to solve $\Pi^\lbadtree$, and thus we get the following two cases:
	\begin{itemize}
		\item The connected component looks good (i.e., it contains no nodes that output error), and the only solution for $\Pi^\lbadtree$ is the one giving $\bot$ (i.e., an empty output) on all nodes;
		\item The connected component does not look good, and nodes can (efficiently) solve $\Pi^\lbadtree$ such that no node uses the output $\bot$.
	\end{itemize}
\end{itemize}
An example of an input labeling for $\Pi^{\lbadgraph}$ that forces all nodes to output $\bot$ is depicted in \Cref{fig:hard-instance-labeled}.

\begin{figure}[t]
	\centering
	\includegraphics[width=0.6\textwidth]{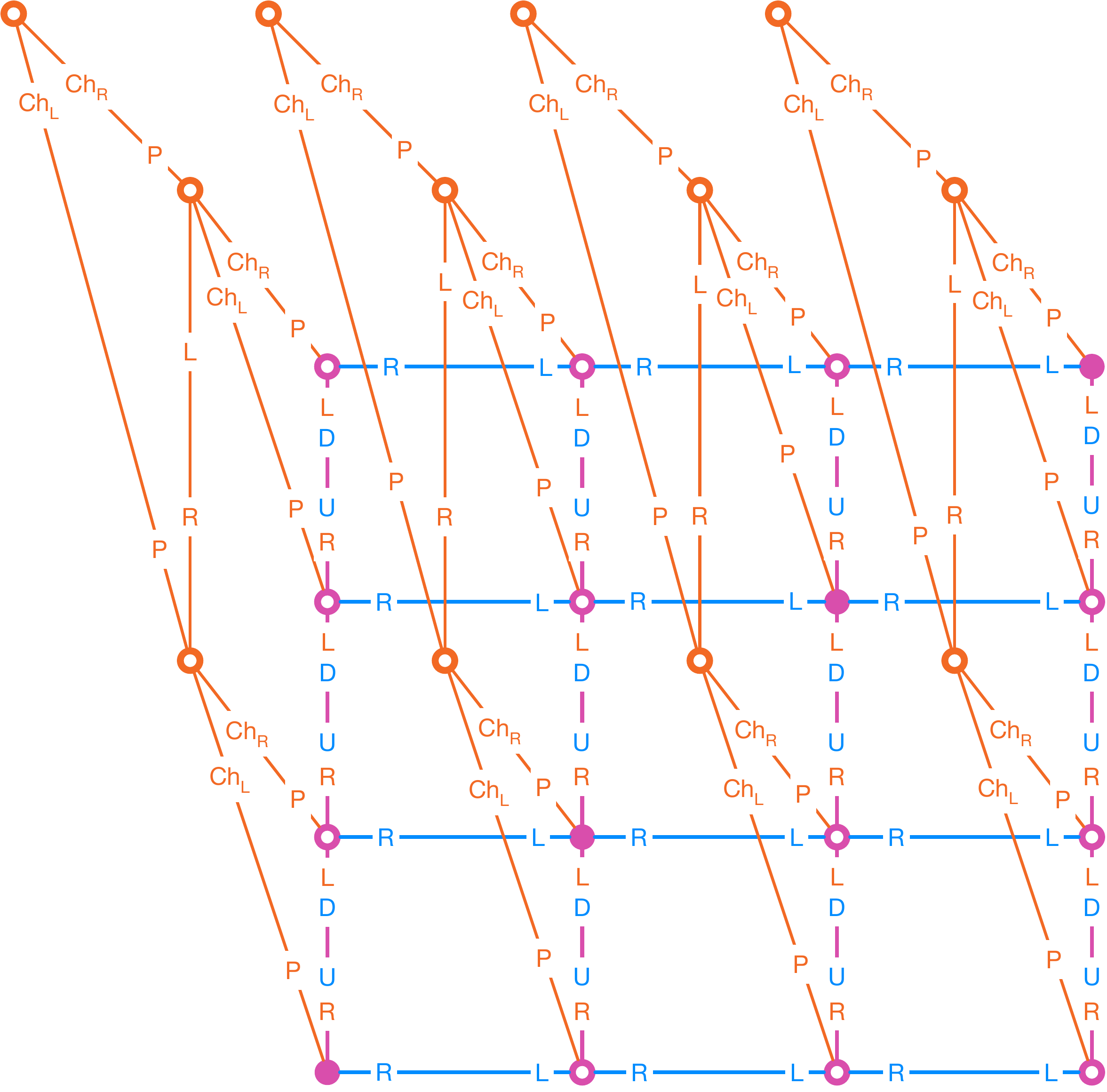}
	\caption{An example of a properly input-labeled hard instance. Orange nodes are labeled $(\ltreenode)$, purple empty nodes are labeled $(\ltreenode,\lgridnode,0)$, purple full nodes are labeled $(\ltreenode,\lgridnode,1)$, orange half-edges with label $\ell$ are labeled $(\ltreeedge,\ell)$, blue half-edges with label $\ell$ are labeled $(\lgridedge,\ell)$, purple half-edges labeled $\ell$ in blue and $\ell'$ in orange are labeled $((\lgridedge,\ell),(\ltreeedge,\ell'))$}.
	\label{fig:hard-instance-labeled}
\end{figure}

\paragraph{\boldmath Properties of $\Pi^{\lbadgraph}$.}
In \Cref{sec:graph-family}, we will prove that the following properties are satisfied by our construction:
\begin{itemize}
	\item For any $G \in \mathcal{G}$, i.e., for any hard instance, it is possible to assign an input labeling on $G$, such that the only valid output for $\Pi^{\lbadgraph}$ is the one where each node outputs $\bot$ (i.e., an empty output).
	\item For any graph $G$, there exists a solution for $\Pi^{\lbadgraph}$, which can be computed in $O(\log n)$ rounds in the \local model, such that the subgraph induced by nodes that output $\bot$ satisfies that each connected component is a graph in $\mathcal{G}$.
\end{itemize}
Note that, as discussed earlier, these two properties are exactly the ones that we need in order to make  $\Pi^{\lhard}$ promise-free.

\subsection{Problem \texorpdfstring{\boldmath$\Pi$}{Pi} and its complexity in the \local model}

We already discussed the high-level idea behind the definition of $\Pi$, and in \Cref{sec:problem-pi} we  define the LCL problem $\Pi$ more formally. We now give a bit more details about $\Pi$. 
Each node $u$ receives a pair of inputs. One input is exactly the input for $\Pi^{\lbadgraph}$, and the other input is a single bit $b_u$. The problem $\Pi$ is defined such that each node needs to solve $\Pi^{\lbadgraph}$, and then, on the subgraph  induced by nodes producing an empty output for $\Pi^{\lbadgraph}$, each node $u$ needs to solve the problem $\Pi^{\lhard}$, where the input of node $u$ for $\Pi^{\lhard}$ is $b_u$. While $\Pi^{\lhard}$ was explained on square grids of size $\Theta(\sqrt{n}) \times \Theta(\sqrt{n})$, as we already discussed, hard instances can be a bit different from square grids, which makes it harder to prove an upper bound for the problem $\Pi$.
In \Cref{sec:pi-complexity} we will provide lower and upper bounds for the problem $\Pi$ in different models of computation. 

\paragraph{\boldmath The complexity of $\Pi$ in the \local model with shared randomness.}
When we discussed vertical grids, we also provided the high-level idea of the upper bound for solving $\Pi^{\lhard}$ with shared randomness in the \local model on hard instances, which we now recap and explain how it is used to solve $\Pi$ in any graph. At first, the nodes spend $O(\log n)$ rounds to mark the ``bad'' parts of the graph. After that, each remaining connected component is a hard instance. Then, the problem $\Pi^{\lhard}$ is solved by brute force on components of diameter $O(\log n)$, while shared randomness is used to solve components of larger diameter.

\paragraph{\boldmath The complexity of $\Pi$ in the \local model with private randomness.}
In order to prove a lower bound of $\Omega(\sqrt{n})$ in the \local model for private randomness, in \Cref{sec:pi-complexity}, we essentially show that any algorithm that is too fast cannot deviate too much from being deterministic, or in other words, for each row of a hard instance, the algorithm must fix the output bits almost deterministically (i.e., the output must always be the same, with high probability). Then, we fix the input of the last column in an adversarial way, as a function of the almost-deterministic outputs of the algorithm, and we prove that in such case the failure probability of the algorithm is too large.

\subsection{Complexity of \texorpdfstring{\boldmath$\Pi$}{Pi} in other models}

In \cref{sec:other-models} we extend the $\Omega(\sqrt n)$ lower bound to
several other models that are far more powerful than \local:
\begin{itemize}
	\item \textbf{\slocal with private randomness (\cref{sec:slocal}).}
	In the \slocal model \cite{ghaffari17_complexity_stoc}, nodes are revealed in
	a sequential, potentially adversarial order.
	Each time a node is revealed, it sees all the states of previously revealed
	nodes that are at most $T$ hops away from it (in particular also their
	outputs) and is asked to commit to an output.
	This model is clearly more powerful than \local since the sequential
	processing of nodes gives us symmetry breaking essentially for free.
	\item \textbf{\Detolcl model (\cref{sec:detolcl}).}
	The \detolcl model \cite{akbari23_locality_icalp} is similar to \slocal
	in that the nodes are also revealed following some sequential order.
	Nevertheless, it is potentially more powerful than \slocal because it is able
	to maintain \emph{global} knowledge of what has been revealed thus far. 
	We show the lower bound for the deterministic variant of \onlinelocal.
	A randomized variant of \onlinelocal also exists
	\cite{akbari24_online_arxiv}, but clearly it has access to shared
	randomness by definition, and so the upper bound from \local extends there.
	\item \textbf{\Boundep (\cref{sec:boundep}).}
	The \boundep encompasses all algorithms satisfying the property that the
	output of any node is independent of outputs that are more than $T$ hops away
	from it.
	This includes, for instance, all \qlocal algorithms without a pre-shared
	quantum state.
	The \boundep has been shown to be less powerful than the randomized version of
	\onlinelocal \cite{akbari24_online_arxiv}, but its relation to
	\detolcl is still unclear.
\end{itemize}

Throughout this work we assume that the nodes have (implicitly or explicitly) some knowledge on the total number of nodes $n$. In \cref{appx:monte-carlo} we show that this is a necessary assumption.

\section{A tree-like structure}\label{sec:treelike}
In this section, we first formally define what is a tree-like structure, then we provide some local constraints that, if satisfied on all nodes, guarantee that the graph is tree-like, and finally we define an LCL problem allowing nodes to quickly prove that the graph is not tree-like.

\begin{definition}[Tree-like structure]
	We say that a graph $G$ is a \emph{tree-like structure} of height $\ell$ if it is possible to assign coordinates $(l_u, k_u)$ to each node $u\in G$, where
	\begin{itemize}[noitemsep]
		\item $0\le l_u < \ell$ denotes the depth of $u$ in the tree, and
		\item $0\le k_u < 2^{l_u}$ denotes the position of $u$ (according to some order) in layer $l_u$,
	\end{itemize}
	such that there is an edge connecting two nodes $u,v\in G$ with coordinates $(l_u, k_u)$ and $(l_v, k_v)$ if and only if:
	\begin{itemize}[noitemsep]
		\item $l_u = l_v$ and $|k_u - k_v | = 1$, or
		\item $l_v = l_u-1$ and $k_v = \lfloor \frac{k_u}{2} \rfloor$, or
		\item $l_u = l_v-1$ and $k_u = \lfloor \frac{k_v}{2} \rfloor$.
	\end{itemize}
\end{definition}

\subsection{Local checkability of a tree-like structure}
We now define sets of labels $\mathcal{V^\ltreelike}$ and $\mathcal{E^\ltreelike}$, and a set of local constraints over such labels, satisfying that a graph $G$ can be $(\mathcal{V^\ltreelike},\mathcal{E^\ltreelike})$-labeled satisfying the constraints if and only if $G$ is a tree-like structure.
The tree-like structure that we use in this paper is the same exact one used in \cite{congest-lcls}, and hence we now report here the constraints as defined in \cite{congest-lcls}.

First of all, nodes are not labeled at all, or equivalently, all nodes have the same label $\bot$. Hence, we define $\mathcal{V^\ltreelike} = \{\bot\}$. Then, the possible half-edge labels are $\mathcal{E^\ltreelike} = \{\lleft,\lright, \lparent,\llch,\lrch\}$ (the labels stand for ``left'', ``right'', ``parent'',  ``left child'', and ``right child'', respectively). 
The set of local constraints $\mathcal{C}^{\ltreelike}$ is defined as follows.

\begin{myframe}{The constraints $\mathcal{C}^{\ltreelike}$ of \cite{congest-lcls}}
	\begin{enumerate}
		\item For any two edges $e,e'$ incident to a node $u$, it must hold that $L_u(e)\neq L_u(e')$;\label{cons-tree:differentEdgeLabels}
		
		\item For each edge $e=\{u,v\}$, if $L_u(e)=\lleft$, then $L_v(e)=\lright$, and vice versa; \label{cons-tree:left-right}
		
		\item For each edge $e=\{u,v\}$, if $L_u(e)=\lparent$, then $L_v(e)\in\{\llch,\lrch\}$, and vice versa;\label{cons-tree:parent-child}
		
		\item If a node $u$ has an incident edge $e=\{u,v\}$ with label $L_u(e)=\lparent$ such that $L_v(e)=\llch$, then $f(u, \lparent,\lrch,\lleft)=u$;\label{cons-tree:triangle}
		
		\item If a node $u$ has an incident edge $e=\{u,v\}$ with label $L_u(e)=\lparent$ such that $L_v(e)=\lrch$, if $u$ has an incident edge labeled $\lright$, then $f(u, \lparent,\lright,\llch,\lleft)=u$.\label{cons-tree:square}
		
		\item If a node has an incident half-edge labeled $\llch$, then it must also have an incident half-edge labeled $\lrch$, and vice versa;\label{cons-tree:2children}
		
		\item A node does not have an incident half-edge labeled $\lparent$ if and only if it has no incident half-edges labeled $\lleft$ or $\lright$; \label{cons-tree:root}
		
		\item If a node $u$ does not have an incident edge $e$ with label $L_u(e)\in\{\llch, \lrch\}$, then neither do nodes $f(u,\lleft)$ and $f(u,\lright)$ (if they exist);\label{cons-tree:boundarychildren}
		
		\item If a node $u$ has an incident edge $e=\{u,v\}$ with label $L_u(e)=\lparent$ such that $L_v(e)=\lrch$ (resp. $L_v(e)=\llch$), then $u$ has an incident edge labeled $\lright$ (resp. $\lleft$) if and only if $f(u,\lparent)$ has an incident edge labeled $\lright$ (resp. $\lleft$).\label{cons-tree:boundarylr}
		
	\end{enumerate}
\end{myframe}
\noindent An example of valid tree-like structure labeled according to $\mathcal{C}^{\ltreelike}$ is depicted in \Cref{fig:treelike-labeled}.

In \cite{congest-lcls}, it has been shown that a graph $G$ can be $(\mathcal{V^\ltreelike},\mathcal{E^\ltreelike})$-labeled satisfying $\mathcal{C}^{\ltreelike}$ if and only if it is a tree-like structure. This result is captured by the following two lemmas.
\begin{lemma}[\cite{congest-lcls}]\label{lem:treelike}
	Let $G$ be a graph that is labeled with labels in $\mathcal{E^\ltreelike}$ such that $\mathcal{C}^{\ltreelike}$ is satisfied for all nodes in $G$. Then, $G$ is a tree-like structure.
\end{lemma}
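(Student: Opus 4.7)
My plan is to explicitly construct the coordinate assignment $(l_u, k_u)$ required by the definition of tree-like structure, and then verify that the edges of $G$ are exactly those prescribed. The work splits naturally into three phases: extracting a rooted tree from the parent labels, computing depths, and placing nodes horizontally within each depth using the $\lleft$/$\lright$ structure.

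The first phase is to show that the $\lparent$ labels induce a rooted tree on $G$. By constraint~1 the parent relation is functional (at most one incident $\lparent$ half-edge per node). To rule out parent cycles, I would define the \emph{descendant depth} of a node as the length of the longest chain obtained by repeatedly moving to children, which is well-defined by finiteness of $G$. If a parent cycle $u_0 \to u_1 \to \cdots \to u_{d-1} \to u_0$ existed, constraint~3 would force $u_i$ to be a child of $u_{i+1}$ for each $i$, making the descendant depth strictly increase around the cycle---a contradiction. Hence every connected component contains a unique node without a $\lparent$ label; since a tree-like structure is by definition connected, I may assume $G$ is connected with a single root $r$.

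In the second and third phases I would set $l_u$ to be the length of the parent-chain from $u$ to $r$, and then prove by induction on $l$ three simultaneous facts: (a) the nodes at depth $l$ form a simple horizontal path via $\lleft$/$\lright$ edges; (b) there are exactly $2^l$ such nodes; (c) the parent of the $k$-th node from the left at depth $l$ is the $\lfloor k/2 \rfloor$-th node at depth $l-1$. The inductive step is where the constraints earn their keep: constraint~4 (the triangle) forces a left child and its sibling right child to be horizontally adjacent; constraint~5 (the square) forces the right child of one parent and the left child of the parent's right neighbor to be horizontally adjacent, precisely gluing consecutive sibling pairs together in the correct order; constraint~6 doubles the layer size; constraint~7 ensures all non-root nodes have a horizontal neighbor; and constraints~8 and~9 propagate ``end of row'' information from parents to children, which prevents the child layer from closing into a cycle.

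Once the coordinates are assigned, I would finish by checking that $G$ has no edges beyond those prescribed; this follows because every half-edge carries some label in $\mathcal{E}^{\ltreelike}$ and each label type corresponds to exactly one kind of prescribed edge. The main obstacle is the inductive step for horizontal placement: one must simultaneously track child-side (left vs.\ right), parent position, and boundary status, and verify that following $\lright$ edges across a layer visits children in the exact order $2k, 2k+1, 2(k+1), 2(k+1)+1, \ldots$ dictated by the parents. Constraints~4, 5, 8, and~9 are precisely engineered to make this work, but the formal verification requires a careful case analysis on each of these sub-cases.
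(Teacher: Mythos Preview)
The paper does not prove this lemma; it is imported from \cite{congest-lcls} without argument, so there is no in-paper proof to compare against. Your three-phase plan is the natural strategy, and your diagnosis of which constraint does what---constraints~4 and~5 glue siblings into the correct horizontal order, constraints~8 and~9 propagate boundary information---is accurate.

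There is, however, a genuine gap in Phase~1. You define descendant depth as the length of the longest child-chain and claim it is ``well-defined by finiteness of $G$'', but if a parent cycle exists---precisely what you are trying to exclude---child-chains can wind around that cycle arbitrarily many times, so the supremum is infinite and the strictly-increasing argument never gets off the ground; the reasoning is circular. The cleanest repair is to front-load part of the horizontal analysis: using constraints~4, 5, and~9 one shows that the parents of the nodes in any horizontal path of length $m$ themselves form a horizontal path of length $m/2$. Since every non-root node lies in a nontrivial horizontal component (constraint~7), following parents strictly shrinks that component and therefore cannot cycle. A smaller point: you justify assuming $G$ connected by ``a tree-like structure is by definition connected'', which assumes the conclusion; just take connectivity as a hypothesis or argue per component.
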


\begin{lemma}[\cite{congest-lcls}]\label{lem:treelike2}
	Each tree-like structure graph $G$ can be labeled with labels in $\mathcal{E^\ltreelike}$ such that the constraints $\mathcal{C}^{\ltreelike}$ are satisfied at all nodes in $G$. 
\end{lemma}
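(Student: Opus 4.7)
The plan is to construct the labeling directly from the coordinates supplied in the definition of tree-like structure. Given $G$ with coordinates $(l_u, k_u)$ for each node $u$, label each half-edge as follows: for a horizontal edge $e = \{u, v\}$ between $u$ at $(l, k)$ and $v$ at $(l, k+1)$, set $L_u(e) = \lright$ and $L_v(e) = \lleft$; for a vertical edge between child $u$ at $(l, k)$ and parent $v$ at $(l-1, \lfloor k/2 \rfloor)$, set $L_u(e) = \lparent$ and set $L_v(e) = \llch$ if $k$ is even, $L_v(e) = \lrch$ if $k$ is odd. The half-edge labels $\lleft, \lright, \lparent, \llch, \lrch$ exhaust all edges incident to $u$ (a node at $(l,k)$ has up to two horizontal neighbors, one parent if $l > 0$, and two children if $l < \ell - 1$), so this assignment is well-defined.

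After this setup the proof reduces to checking each of the nine constraints in $\mathcal{C}^{\ltreelike}$ against the coordinate arithmetic. Constraints \ref{cons-tree:differentEdgeLabels}, \ref{cons-tree:left-right}, and \ref{cons-tree:parent-child} follow immediately from how the labeling is defined. Constraints \ref{cons-tree:2children}, \ref{cons-tree:root}, and \ref{cons-tree:boundarychildren} translate into simple statements about the perfect binary tree: every non-leaf has both children, the root is the unique node at depth~$0$ (hence the only node without horizontal neighbors, using that all layers at depth $l \ge 1$ have at least two nodes), and all leaves lie at depth $\ell - 1$.

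The two interesting cases are constraints \ref{cons-tree:triangle} and \ref{cons-tree:square}, which assert that short closed walks through labels return to the starting node. For \ref{cons-tree:triangle}, a left child $u = (l, 2k)$ has parent $(l-1, k)$, whose right child is $(l, 2k+1)$, whose left neighbor is $(l, 2k) = u$, so $f(u, \lparent, \lrch, \lleft) = u$. For \ref{cons-tree:square}, a right child $u = (l, 2k+1)$ that has a right neighbor satisfies $2k+1 < 2^l - 1$, whence the walk $\lparent, \lright, \llch, \lleft$ traces $(l-1, k) \to (l-1, k+1) \to (l, 2k+2) \to (l, 2k+1) = u$.

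The only real obstacle is boundary consistency, i.e., verifying constraint~\ref{cons-tree:boundarylr} and the well-definedness of the walk in~\ref{cons-tree:square} when horizontal neighbors may or may not exist. This reduces to the arithmetic observation that, for a right child $(l, 2k+1)$ of $(l-1, k)$, the inequality $2k+1 < 2^l - 1$ is equivalent to $k < 2^{l-1} - 1$, so the existence of a right horizontal neighbor is preserved between parent and child (and symmetrically for left children). Once this equivalence is in hand, every condition is checked by a short case analysis, and the labeling satisfies $\mathcal{C}^{\ltreelike}$ at every node.
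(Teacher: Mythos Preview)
The paper does not actually prove this lemma; it is imported verbatim from \cite{congest-lcls} and stated without proof. Your argument is correct and is exactly the natural construction one would expect: assign half-edge labels directly from the coordinate system and verify each constraint by coordinate arithmetic, with the only mildly non-trivial checks being the closed-walk constraints~\ref{cons-tree:triangle} and~\ref{cons-tree:square} and the boundary equivalence underlying~\ref{cons-tree:boundarylr}.
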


\subsection{Proving that a graph is not tree-like}
We now define an LCL problem $\Pi^{\lbadtree}$, already informally introduced in \Cref{ssec:roadmap:treelike}. We omitted some details there, that we now present.
While we described tree-like structures as separate entities, we will use such structures in connection to another one, i.e., a grid. For this reason, we want to allow nodes to not only prove that the tree-like structure is invalid, but also to produce a proof in the case in which the tree-like structure is valid but wrongly connected to the rest of the graph. In order to capture this, in the definition of $\Pi^{\lbadtree}$, nodes receive some input, and in particular nodes could be marked, indicating that there is some issue unrelated to the tree-like structure per se. More in detail, the high-level idea behind the definition of $\Pi^{\lbadtree}$ is the following:
\begin{itemize}
	\item Nodes receive an input indicating whether they are marked or not;
	\item There are two possible types of output, either nodes give an empty output, or nodes produce a proof that the graph is not tree-like or that it contains at least one marked node;
	\item If the graph is tree-like and does not contain any marked node, the only valid solution is the one where all nodes produce an empty output.
	\item If the graph is not tree-like, or it contains at least one marked node, nodes can spend $O(\log n)$ time in the \local model to produce a proof of this fact.
\end{itemize}

More formally, the problem $\Pi^{\lbadtree}$ is defined on $(\{0,1\},\mathcal{E^\ltreelike})$-labeled graphs, that is, it is assumed that every half-edge has an input label in $\mathcal{E^\ltreelike}$, and that each node is either labeled $0$ or $1$, where nodes labeled $1$ are denoted as \emph{marked} nodes.
The output labels of $\Pi^{\lbadtree}$ are $\mathcal{V^\lbadtree} = \{\lerror, \bot\} \cup \{ (\lpointer,p) \mid p \in \{\lleft,\lright, \lparent,\lrch\} \}$. The constraints $\mathcal{C}^{\lbadtree}$ of $\Pi^{\lbadtree}$ are defined as follows.

\begin{myframe}{The constraints $\mathcal{C}^{\lbadtree}$}
\begin{enumerate}
	\item A node can always output $\bot$.\label{constr:pibad-bot}
	\item A node $u$ can output $\lerror$ only if $u$ does not satisfy $\mathcal{C}^{\ltreelike}$ or if $u$ is marked.\label{constr:noerror}
	\item Let $u$ be a node outputting $(\lpointer,p)$. It is required that node $u$ has an incident half-edge labeled $p$. Let $v$ be the node $f(u,p)$, that is, the node reachable from $u$ by following its half-edge labeled $p$. Then, either the output of $v$ is $\lerror$, or the output of $v$ is $(\lpointer,p')$. In the latter case, it must hold that $f(v,p) \neq u$, and the values of $p$ and $p'$ must satisfy the following:\label{constr:pointers}
	\begin{enumerate}
		\item If $p \in\{\lleft, \lright\} $ then $p' = p$;
		\item If $p = \lparent$ then $p' \in \{\lparent, \lleft,\lright\}$;
		\item If $p = \lrch$ then $p' \in \{\lrch, \lleft, \lright\}$.
	\end{enumerate}
\end{enumerate}
\end{myframe}

We now prove that, on properly labeled tree-like structures not containing any marked nodes, the only solution for  $\Pi^{\lbadtree}$ is the one where all nodes output $\bot$.
\begin{lemma}\label{lem:valid-sol-for-badtree}
	Let $G$ be a $(\{0,1\},\mathcal{E^\ltreelike})$-labeled graph where $\mathcal{C}^{\ltreelike}$ is satisfied on all nodes and all nodes are not marked. Then, the only valid solution for $\Pi^{\lbadtree}$ is the one assigning $\bot$ to all nodes.
\end{lemma}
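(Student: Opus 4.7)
The plan is to argue by contradiction: assume some node outputs something other than $\bot$, and derive a violation of one of the constraints of $\mathcal{C}^{\lbadtree}$ using the rigid geometry of a properly labeled tree-like structure. There are only two non-$\bot$ output types, $\lerror$ and $(\lpointer,p)$, which I will handle in turn.

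\emph{Ruling out $\lerror$.} This is immediate. Constraint~\ref{constr:noerror} of $\mathcal{C}^{\lbadtree}$ allows a node $u$ to output $\lerror$ only if $u$ violates $\mathcal{C}^{\ltreelike}$ locally or $u$ is marked. By hypothesis, neither condition holds anywhere in $G$, so no node can output $\lerror$.

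\emph{Ruling out pointers via an infinite-chain argument.} Suppose, for contradiction, that some node $u_0$ outputs $(\lpointer,p_0)$. Using constraint~\ref{constr:pointers}, together with the fact that $\lerror$ is impossible, I can define inductively $u_{i+1} = f(u_i,p_i)$ together with a pointer $p_{i+1}$ so that $u_{i+1}$ outputs $(\lpointer,p_{i+1})$; moreover each $u_i$ must have an incident half-edge labeled~$p_i$. The constraints on the $p_i$ say: if $p_i\in\{\lleft,\lright\}$ then $p_{i+1}=p_i$; if $p_i=\lparent$ then $p_{i+1}\in\{\lparent,\lleft,\lright\}$; if $p_i=\lrch$ then $p_{i+1}\in\{\lrch,\lleft,\lright\}$. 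Thus the chain first performs a (possibly empty) sequence of $\lparent$ or $\lrch$ moves, then at most once transitions into a purely horizontal chain of $\lleft$'s or $\lright$'s.

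\emph{Deriving the contradiction.} Here I use \Cref{lem:treelike}, which guarantees $G$ is a genuine tree-like structure with coordinates $(l_u,k_u)$. Within a single layer, $\lleft$ (resp.\ $\lright$) edges strictly decrease (resp.\ increase) the coordinate $k$, so a horizontal pointer chain must reach a node with $k=0$ or $k=2^{l}-1$, which has no incident $\lleft$ (resp.\ $\lright$) half-edge, contradicting the requirement that such a half-edge exist. A $\lparent$-chain strictly decreases $l$: if it never transitions, it reaches the root, which by constraint~\ref{cons-tree:root} of $\mathcal{C}^{\ltreelike}$ has no $\lparent$, $\lleft$, nor $\lright$ half-edge, and so cannot output any pointer; if it transitions to $\lleft$ or $\lright$, we reduce to the horizontal case on some non-root layer. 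A $\lrch$-chain strictly increases $l$ until it reaches a leaf, which by constraint~\ref{cons-tree:2children} has no $\lrch$ half-edge, and again any transition reduces to the horizontal case. Every case yields a contradiction, so no node outputs $(\lpointer,p)$, completing the proof.

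The main obstacle is the case analysis of the pointer chain: one has to verify that the constraints of $\mathcal{C}^{\lbadtree}$ force \emph{strict} progress in either depth or horizontal position, so that in a finite tree-like structure the chain must fall off a boundary (root, leaf, or extremal position on a layer) where the required half-edge does not exist. The rest is bookkeeping powered by \Cref{lem:treelike}.
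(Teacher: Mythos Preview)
Your proof is correct. Both you and the paper rule out $\lerror$ immediately and then analyze the pointer chain, observing that the allowed transitions force the sequence of pointer directions to match the pattern $(\lparent^*\mid\lrch^*)(\lleft^*\mid\lright^*)$.

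Where you diverge is in how the contradiction is extracted. The paper argues that, since the graph is finite and no node outputs $\lerror$, the chain must eventually revisit a node, producing a cycle; it then observes that any cycle in a tree-like structure must change layers and return (since each layer is a path), hence must contain both $\lparent$ and $\lrch$, which the pattern forbids. You instead argue that each segment of the pattern makes strict monotone progress in one coordinate (depth for $\lparent$/$\lrch$, horizontal position for $\lleft$/$\lright$), so the chain is driven into a boundary node---root, leaf, or end of a layer---that lacks the half-edge the constraint on that node demands. Your argument is more explicit about where the contradiction lands and avoids reasoning about cycles; the paper's version is terser but leans implicitly on the structural fact that layers are paths. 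Both are fine.
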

\begin{proof}
	Constraint \ref{constr:pibad-bot} ensures that a solution where all nodes output $\bot$ is indeed possible, and constraint \ref{constr:noerror} ensures that there is no node in $G$ outputting the label $\lerror$. In the following, we prove that no node can output a pointer label, implying the claim.
	
	For a contradiction, assume that there is a node $u$ outputting $(\lpointer, p)$, for some $p \in \{\lleft,\lright, \lparent,\lrch\}$. Since no node outputs $\lerror$, then $v:=f(u, p)$ must output $(\lpointer, p')$ satisfying that $f(v,p')\neq u$. The same reasoning applies recursively on $v$. Hence, there must exist a path $(v_1, v_2, \ldots, v_k)$ where $v_1=u$, $v_2=v$, and each node $v_i$ outputs $(\lpointer,p_i)$, for some value $p_i$. Since no node outputs $\lerror$, then $v_k=v_i$ for some $i\le k-2$, forming a cycle. In the following, we prove that the constraints of $\Pi^{\lbadtree}$ ensure that cycles are not possible. 
	
	A necessary condition to properly label $(v_i, \ldots, v_k)$ with pointers is to use both labels $\lparent$ and $\lrch$, since any cycle on these nodes must start on some layer of the tree-like structure, then change layer, and eventually go back to the starting layer. However, the sequence of pointers allowed by constraint \ref{constr:pointers} must satisfy the regular expression $(\lparent^*|\lrch^*)(\lleft^*|\lright^*)$, which contradicts the fact that $\lparent$ and $\lrch$ are present at the same time.
\end{proof}

In \cite{congest-lcls}, it has been shown that, if the graph is not a valid tree-like structure, or if there is at least one marked node, then nodes can efficiently produce a proof of this fact.
\begin{lemma}[Lemma 6.10 in the arXiv version of \cite{congest-lcls}, rephrased]\label{lem:prove-tree-invalid}
	Let $G$ be a $(\{0,1\},\mathcal{E^\ltreelike})$-labeled graph where either $\mathcal{C}^{\ltreelike}$ is not satisfied on at least one node, or there is at least one marked node. Then, there exists a solution for $\Pi^{\lbadtree}$ where all nodes produce an output different from $\bot$. Moreover, such a solution can be computed in $O(\log n)$ rounds in the \local model.
\end{lemma}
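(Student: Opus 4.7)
The plan is to construct, in $O(\log n)$ rounds, a pointer assignment that routes each non-error node to some error node (a marked node or a node locally violating $\mathcal{C}^{\ltreelike}$). The key structural fact is that, by \cref{lem:treelike}, whenever $\mathcal{C}^{\ltreelike}$ is globally satisfied the graph is already a valid tree-like structure, which has depth and diameter $\Theta(\log n)$; so in $O(\log n)$ rounds every node learns its whole connected component and can locate any marked nodes or $\mathcal{C}^{\ltreelike}$-violators in it.

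First I would handle the clean case, in which $\mathcal{C}^{\ltreelike}$ holds everywhere and the graph contains a marked node $u^*$ at coordinates $(l^*, k^*)$. The regular expression $(\lparent^*\mid\lrch^*)(\lleft^*\mid\lright^*)$ allowed by \cref{constr:pointers} suggests the natural routing: every node strictly deeper than $u^*$ outputs $\lparent$ to climb toward depth $l^*$; every node strictly shallower than $u^*$ outputs $\lrch$ to descend along the right-child spine of its subtree until reaching depth $l^*$; and every node at depth $l^*$ different from $u^*$ outputs $\lleft$ or $\lright$ according to whether it lies to the right or to the left of $u^*$. A direct verification shows that the compatibility clauses of \cref{constr:pointers} are satisfied at the turning points where a chain switches from a vertical direction ($\lparent$ or $\lrch$) to a horizontal one, that the non-backtracking condition $f(v,p) \neq u$ holds at every step (none of these chains ever revisits an edge in the opposite direction), and that every resulting chain terminates at $u^*$, which outputs $\lerror$ by \cref{constr:noerror}.

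The second case is when some node violates $\mathcal{C}^{\ltreelike}$. Here I would first let each violating node output $\lerror$ as permitted by \cref{constr:noerror}. To route the remaining nodes, I would have each node explore its $O(\log n)$-hop neighborhood: if the ball looks like a consistent piece of a tree-like structure, the node applies the construction above with the nearest error or marked node as the target; otherwise the local inconsistency must itself place a closer $\lerror$ node within the ball that can be used as the target of a shorter chain.

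The main obstacle, and the step I expect to demand the most care, is proving that for \emph{any} labeled graph --- including ones adversarially designed to deceive a local observer --- each non-error node can identify, within $O(\log n)$ hops, an error node together with a pointer chain to it that simultaneously respects the direction-compatibility and non-backtracking rules of \cref{constr:pointers} even when chains from several sources merge at the same intermediate node. The merging issue is controlled by the rigidity of the pointer language: once a chain turns horizontal it must continue in the same horizontal direction, so two chains meeting at a common node are forced to agree on their orientation. This is essentially the combinatorial content already worked out in \cite{congest-lcls}, and the argument there can be imported directly.
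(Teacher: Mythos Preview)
The paper does not prove this lemma; as the statement itself indicates, it is quoted (rephrased) from \cite{congest-lcls}, and no argument appears in the present paper. Your proposal therefore already goes beyond what the paper provides.

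The clean-case routing you describe---when $\mathcal{C}^{\ltreelike}$ holds globally and there is a marked target $u^*$ at depth $l^*$---is correct: assigning $\lparent$ strictly below depth $l^*$, $\lrch$ strictly above depth $l^*$, and $\lleft/\lright$ at depth $l^*$ satisfies every transition clause of \cref{constr:pointers}, and the non-backtracking condition $f(v,p)\neq u$ is automatic in a properly labeled tree-like structure.

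For the second case---an arbitrary labeled graph in which some node violates $\mathcal{C}^{\ltreelike}$---you correctly isolate the real difficulty: one must show that every node can, within $O(\log n)$ hops, reach an error node along a path of the allowed form, \emph{and} that the resulting pointer directions can be chosen in a globally consistent way, even though the graph need not have diameter $O(\log n)$ once the constraints are broken. You then defer this step to \cite{congest-lcls}. Since that is precisely what the paper itself does, there is nothing further to compare.
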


\section{Grid structure}\label{sec:grid}
In this section, we first formally define what is a grid structure, then we provide some local constraints that, if satisfied on all nodes, guarantee that the graph is a grid (or something that locally looks like a grid everywhere), and finally we define some constraints that enforce a grid to be vertical.

\begin{definition}[Grid structure]
	We say that a graph $G$ is a \emph{grid stricture} of size $h\times w$ ($h,w>0$) if it is possible to assign coordinates $(x_u, y_u)$ to each node $u\in G$, where $0\le x_u < w$, $0 \le y_u < h$, such that there is an edge connecting two nodes $u,v\in G$ with coordinates $(x_u, y_u)$ and $(x_v, y_v)$ if and only if:
	\begin{itemize}[noitemsep]
		\item $x_u = x_v$ and $|y_u - y_v| = 1$, or
		\item $y_u = y_v$ and $|x_u - x_v| = 1$.
	\end{itemize} 
\end{definition}

\subsection{Local checkability of a grid structure}
We now define sets of labels $\mathcal{V^\lgrid}$ and $\mathcal{E^\lgrid}$, and a set of local constraints over such labels, satisfying that a graph $G$ can be $(\mathcal{V^\lgrid},\mathcal{E^\lgrid})$-labeled satisfying the constraints if and only of $G$ is a grid structure (with some exceptions that we will see later).
The grid structure that we use in this paper is the same exact one used in \cite{congest-lcls}, and hence we now report here the constraints as defined in \cite{congest-lcls}.

First of all, nodes are not labeled at all, or equivalently, all nodes have the same label $\bot$. Hence, we define $\mathcal{V^\lgrid} = \{\bot\}$. Then, the possible half-edge labels are $\mathcal{E^\lgrid} = \{\lup,\ldown, \lleft, \lright\}$ (the labels stand for ``up'', ``down'', ``left'', and ``right'', respectively). 
The set of local constraints $\mathcal{C}^{\lgrid}$ is defined as follows.

\begin{myframe}{The constraints $\mathcal{C}^{\lgrid}$ of \cite{congest-lcls}}
\begin{enumerate}
	\item For any two edges $e,e'$ incident to a node $u$, it must hold that $L_u(e)\neq L_u(e')$.\label{cons-grid:differentEdgeLabels}
	
	\item For each edge $e=\{u,v\}$, if $L_u(e)=\lleft$, then $L_v(e)=\lright$, and vice versa. \label{cons-grid:left-right}
	
	\item For each edge $e=\{u,v\}$, if $L_u(e)=\lup$, then $L_v(e)=\ldown$, and vice versa.\label{cons-grid:up-down}
	
	\item If a node $u$ has two incident edges labeled with $\lright$ and $\lup$ respectively, then it must hold that $f(u, \lright,\lup,\lleft,\ldown)=u$. \label{cons-grid:square}
	
	\item If $f(u,\lright)$ exists, then
	$u$ has an incident edge labeled with $\ldown$ (resp. $\lup$) if and only if  $f(u,\lright)$ has an incident edge labeled with $\ldown$ (resp. $\lup$). \label{cons-grid:down-propagates}
	
	\item If $f(u,\lup)$ exists, then
	$u$ has an incident edge labeled with $\lleft$ (resp. $\lright$) if and only if  $f(u,\lup)$ has an incident edge labeled with $\lleft$  (resp. $\lright$). \label{cons-grid:left-propagates}
\end{enumerate}
\end{myframe}
\noindent An example of valid grid structure labeled according to $\mathcal{C}^{\lgrid}$ is depicted in \Cref{fig:grid-labeled}.

In \cite{congest-lcls}, it has been shown that a graph $G$ can be $(\mathcal{V^\lgrid},\mathcal{E^\lgrid})$-labeled satisfying $\mathcal{C}^{\lgrid}$ if and only if it is a grid structure, with some exceptions. This result is captured by the following two lemmas.
\begin{lemma}[\cite{congest-lcls}]\label{lem:grid}
	Let $G$ be a graph that is $(\mathcal{V^\lgrid},\mathcal{E^\lgrid})$-labeled such that $\mathcal{C}^{\lgrid}$ is satisfied for all nodes in $G$. Moreover, assume that there exists at least one node that has no incident half-edge labeled $\ldown$ (or $\lup$), and that there exists at least one node that has no incident half-edge labeled $\lleft$ (or $\lright$). Then, $G$ is a grid structure.
\end{lemma}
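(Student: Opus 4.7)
The plan is to build an explicit graph isomorphism between $G$ and the $h \times w$ integer grid by erecting a coordinate system anchored at a corner node supplied by the hypothesis. The argument naturally splits into three stages: locating a corner, defining coordinates consistently, and ruling out torus- or cylinder-like wraparound.

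First I would identify a corner node $u_0 \in V(G)$ having neither a $\ldown$ edge nor a $\lleft$ edge. By constraint \ref{cons-grid:down-propagates}, the property ``no $\ldown$ edge'' is preserved under $\lright$ and $\lleft$ moves, so the node supplied by the hypothesis that is missing a $\ldown$ half-edge generates an entire $\ldown$-free row; symmetrically, by \ref{cons-grid:left-propagates} the node missing a $\lleft$ half-edge generates an entire $\lleft$-free column. Starting from the latter and walking $\ldown$, constraint \ref{cons-grid:differentEdgeLabels} forces every step to land on a strictly new node, so in the finite graph this walk must terminate at some $u_0$ with no $\ldown$ edge; by \ref{cons-grid:left-propagates} applied along the walk, $u_0$ still has no $\lleft$ edge.

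Next I would define $u_{i,j}$ as the endpoint of the walk from $u_0$ consisting of $i$ $\lright$-steps followed by $j$ $\lup$-steps. Constraint \ref{cons-grid:square} is exactly the commutativity of a single $\lright$--$\lup$ pair, and inductively swapping adjacent $\lright$--$\lup$ pairs shows that any interleaving of $i$ $\lright$-steps and $j$ $\lup$-steps from $u_0$ reaches the same node, so $u_{i,j}$ is well-defined whenever the walk can be completed. Using \ref{cons-grid:down-propagates} and \ref{cons-grid:left-propagates}, the dimensions $w$ and $h$ of the coordinate region do not depend on which row or column is used to measure them, since the presence of right- and up-edges propagates uniformly along the orthogonal direction. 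By construction, coordinates differing by a unit vector correspond to nodes joined by an edge with the expected label.

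The main obstacle is to show the coordinate map $(i,j) \mapsto u_{i,j}$ is injective on $\{0,\ldots,w-1\}\times\{0,\ldots,h-1\}$, which is exactly the step that rules out wraparound. Suppose for contradiction $u_{i,j} = u_{i',j'}$ with $(i,j) \ne (i',j')$ and WLOG $i \le i'$; set $a = i' - i$. Concatenating $\lright^{i'}\lup^{j'}$ with the reverse walk $\ldown^{j}\lleft^{i}$ yields a closed walk at $u_0$. Repeated application of \ref{cons-grid:square} lets us swap any $\lup$/$\ldown$ past any $\lleft$/$\lright$ and cancel the inverse pairs $\lup\ldown$ and $\lright\lleft$; the result is a closed walk at $u_0$ of the shape $\lright^{a}\lup^{b}$ when $j \le j'$ (with $b = j' - j$) or $\lright^{a}\ldown^{b'}$ when $j > j'$ (with $b' = j - j'$). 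In the $\lup^{b}$ case, either $b = 0$ and the last step of the loop is a $\lright$ arriving at $u_0$, giving $u_0$ a $\lleft$ edge, or $b > 0$ and the last step is a $\lup$ arriving at $u_0$, giving $u_0$ a $\ldown$ edge -- either outcome contradicts the defining properties of $u_0$. In the $\ldown^{b'}$ case, the first $\ldown$-step must be taken at a node of row $0$ (namely $u_0$ itself if $a = 0$, or $u_{a,0}$ otherwise), but every node of row $0$ is $\ldown$-free by \ref{cons-grid:down-propagates} -- again a contradiction. Hence the coordinate map is injective; surjectivity follows because its image is closed under the four labeled moves (using \ref{cons-grid:down-propagates}, \ref{cons-grid:left-propagates}, and the definitions of $w, h$), so connectivity of $G$ forces the image to equal $V(G)$, and the adjacencies match those of the $h \times w$ grid by construction.
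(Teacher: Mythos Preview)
The paper does not prove this lemma here; it is quoted from \cite{congest-lcls} without argument, so there is no in-paper proof to compare against and your attempt must be judged on its own.

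Your overall strategy---anchor a corner, lay down coordinates via $\lright/\lup$-walks, and exclude wraparound using the corner's missing edges---is the natural one and would succeed, but the corner-finding step contains a real gap. You assert that constraint~\ref{cons-grid:differentEdgeLabels} forces the $\ldown$-walk from the $\lleft$-free node to land on strictly new nodes and hence terminate. That constraint only says each node has at most one $\ldown$-half-edge, so the walk is \emph{deterministic}; together with constraint~\ref{cons-grid:up-down} one can indeed show the walk cannot re-enter itself mid-path, but nothing you have invoked rules out the walk closing up into a $\ldown$-cycle through its starting node. To kill that possibility you must bring in the \emph{other} boundary hypothesis---the node lacking a $\ldown$ (or $\lup$) half-edge---together with connectivity of $G$: if the $\ldown$-walk cycled, constraints~\ref{cons-grid:square}, \ref{cons-grid:down-propagates}, and \ref{cons-grid:left-propagates} would propagate that vertical cycle across every column reachable by $\lright$-moves, forcing every vertex of the component to carry both a $\ldown$- and an $\lup$-edge, contradicting the existence of that node. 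As written you use only the $\lleft$-free hypothesis before the injectivity stage, and one boundary hypothesis alone is not enough: a labeled vertical cylinder satisfies all of $\mathcal{C}^{\lgrid}$ and contains an entire $\lleft$-free column, yet every $\ldown$-walk in it is a cycle.

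A minor secondary point: in the injectivity step, the rearrangement of the closed walk into $\lright^{a}\lup^{b}$ (or $\lright^{a}\ldown^{b'}$) by ``repeated application of \ref{cons-grid:square}'' silently assumes each intermediate commuted walk actually lives in $G$; constraint~\ref{cons-grid:square} only fires at a node that already has both an $\lright$- and an $\lup$-edge. A cleaner route that avoids tracking these side conditions is to apply $\lleft^{i}$ and then $\ldown^{\min(j,j')}$ directly to both sides of the equality $u_{i,j}=u_{i',j'}$, which drops one side to $u_0$ and exhibits a forbidden $\lleft$- or $\ldown$-edge on the other.
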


\begin{lemma}[\cite{congest-lcls}]\label{lem:grid2}
	Each grid structure graph $G$ can be $(\mathcal{V^\lgrid},\mathcal{E^\lgrid})$-labeled such that the constraints $\mathcal{C}^{\lgrid}$ are satisfied at all nodes in $G$. 
\end{lemma}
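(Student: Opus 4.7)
The plan is direct: exhibit the canonical labeling induced by the grid coordinates and then verify each of the six constraints in turn. Specifically, for each node $u$ with coordinates $(x_u,y_u)$, I would set $L_u(e) = \lright$ for the edge $e$ to $(x_u+1,y_u)$ (if it exists), $L_u(e) = \lleft$ for the edge to $(x_u-1,y_u)$, $L_u(e) = \lup$ for the edge to $(x_u,y_u+1)$, and $L_u(e) = \ldown$ for the edge to $(x_u,y_u-1)$. By the definition of a grid structure, each edge is either horizontal (same $y$, $x$ differing by $1$) or vertical (same $x$, $y$ differing by $1$), so every half-edge receives exactly one label in $\mathcal{E}^{\lgrid}$.

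Next I would verify the constraints one by one. Constraint~\ref{cons-grid:differentEdgeLabels} is immediate because the four possible neighbors of $u$ sit in four distinct coordinate directions, so they receive four distinct labels. Constraints~\ref{cons-grid:left-right} and~\ref{cons-grid:up-down} follow from the symmetry of the coordinate comparison: if $v = (x_u+1,y_u)$ then from $v$'s perspective $u = (x_v-1,y_v)$, so $L_v(\{u,v\}) = \lleft$ while $L_u(\{u,v\}) = \lright$; the vertical case is analogous.

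For constraint~\ref{cons-grid:square}, suppose $u$ has both an incident $\lright$ edge and an incident $\lup$ edge; then $x_u+1 < w$ and $y_u+1 < h$, so the node $w' = (x_u+1, y_u+1)$ exists in the grid, as do both edges from $(x_u+1,y_u)$ to $w'$ and from $w'$ to $(x_u,y_u+1)$. Following the labels $\lright, \lup, \lleft, \ldown$ from $u$ therefore traces the unique path $u \to (x_u+1,y_u) \to w' \to (x_u,y_u+1) \to u$, giving $f(u,\lright,\lup,\lleft,\ldown) = u$. For constraints~\ref{cons-grid:down-propagates} and~\ref{cons-grid:left-propagates}, observe that whether $u$ has an incident $\ldown$ (resp.\ $\lup$) half-edge depends only on $y_u$, which is shared with $f(u,\lright) = (x_u+1,y_u)$; symmetrically, the existence of a $\lleft$ or $\lright$ half-edge depends only on $x_u$, which is shared with $f(u,\lup)$.

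I do not expect any real obstacle here: the proof is a mechanical check once the canonical coordinate-based labeling is written down. The only minor subtlety is the square closure in constraint~\ref{cons-grid:square}, where one must explicitly invoke the rectangular shape of the grid to argue that the corner node $(x_u+1,y_u+1)$ is present whenever its two grid-neighbors $(x_u+1,y_u)$ and $(x_u,y_u+1)$ are — but this is immediate from the product form $\{0,\dots,w-1\}\times\{0,\dots,h-1\}$ of the coordinate set.
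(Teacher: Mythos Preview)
Your proposal is correct and is exactly the canonical coordinate-based construction one would expect. Note, however, that the paper does not actually prove this lemma: it is imported verbatim from \cite{congest-lcls}, so there is no in-paper proof to compare against. Your argument is the standard one and would serve perfectly well as a self-contained proof here.
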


\subsection{Additional constraints on the grid structure}
We now formally define what is a vertical grid structure.
\begin{definition}[Vertical grid structure]
	A $(\mathcal{V^\lgrid},\mathcal{E^\lgrid})$-labeled graph $G$ is a \emph{vertical grid structure} if and only if:
	\begin{itemize}[noitemsep]
		\item The graph $G$ is a grid structure and $\mathcal{C}^{\lgrid}$ is satisfied on all nodes;
		\item Let $w$ (resp.\ $h$) be the length of the paths induced by half-edges labeled $\lleft$ or $\lright$ (resp.\ $\lup$ or $\ldown$). Then, $h \ge w$.
	\end{itemize}
\end{definition}
We augment the $(\mathcal{V^\lgrid},\mathcal{E^\lgrid})$-labeling, and we define additional constraints, in order to obtain the local checkability of a vertical grid structure. The labeling for vertical grids is obtained by only changing the labels of the nodes. We define $\mathcal{V}^{\lvgrid}$ as $\{0,1\}$. The additional constraints are the following.
\begin{myframe}{The constraints to be added to $\mathcal{C}^{\lgrid}$ to obtain $\mathcal{C}^{\lvgrid}$}
\begin{enumerate}
	\item If a node $u$ is labeled $1$, then $f(u,\lup,\lright)$ and $f(u,\ldown,\lleft)$ are also labeled $1$, if they exist. \label{constr:diagonal}
	\item If a node $u$ is labeled $1$ and $f(u,\ldown) = \bot$, then $f(u,\lleft) = \bot$. \label{constr:vert1}
	\item If a node $u$ is labeled $1$ and $f(u,\lup) = \bot$, then $f(u,\lright) = \bot$.  \label{constr:vert2}
\end{enumerate}
\end{myframe}
\noindent We call $\mathcal{C}^{\lvgrid}$ the constraints obtained by adding the above additional constraints to  $\mathcal{C}^{\lgrid}$.
Note that, by labeling all nodes with $0$, the additional constraints in $\mathcal{C}^{\lvgrid}$ are trivially satisfied. Hence, we cannot claim that if $\mathcal{C}^{\lvgrid}$  is satisfied then the graph is a vertical grid. However, we now prove that a graph $G$ can be $(\mathcal{V^\lvgrid},\mathcal{E^\lgrid})$-labeled \emph{such that at least one node is labeled $1$} and satisfying $\mathcal{C}^{\lvgrid}$ if and only if it is a vertical grid structure.

\begin{lemma}\label{lem:vgrid}
	Let $G$ be a graph that is $(\mathcal{V^\lvgrid},\mathcal{E^\lgrid})$-labeled such that $\mathcal{C}^{\lvgrid}$ is satisfied on all nodes in $G$. Moreover, assume that there exists at least one node that has no incident half-edge labeled $\ldown$ (or $\lup$), that there exists at least one node that has no incident half-edge labeled $\lleft$ (or $\lright$), and that there exists at least one node labeled $1$. Then, $G$ is a vertical grid structure.
\end{lemma}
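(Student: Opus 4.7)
The plan is to first invoke Lemma \ref{lem:grid} to deduce that $G$ is a grid structure with some dimensions $h \times w$, and then derive $h \ge w$ by identifying a diagonal of $1$-labeled nodes running through the given node $u^{*}$. Assigning $u^{*}$ grid coordinates $(x_0, y_0)$, the sequence $u_0 = u^{*}$, $u_{i+1} = f(u_i, \lup, \lright)$ consists entirely of $1$-labeled nodes with coordinates $(x_0 + i, y_0 + i)$ by iterating constraint \ref{constr:diagonal} of $\mathcal{C}^{\lvgrid}$, up to the last node $v_{\mathrm{end}}$ where the operation becomes undefined; applying the same reasoning in the opposite direction with $f(\cdot, \ldown, \lleft)$ produces a similar sequence of $1$-labeled nodes with coordinates $(x_0 - j, y_0 - j)$ terminating at some $v_{\mathrm{start}}$. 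The key step will be to pin down the coordinates of these two endpoints.

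At $v_{\mathrm{end}}$, the failure of $f(v_{\mathrm{end}}, \lup, \lright)$ means either (a) $v_{\mathrm{end}}$ has no up neighbor, in which case constraint \ref{constr:vert2} of $\mathcal{C}^{\lvgrid}$ (applicable because $v_{\mathrm{end}}$ is $1$-labeled) forces $v_{\mathrm{end}}$ to have no right neighbor, or (b) $v_{\mathrm{end}}$ has an up neighbor that itself has no right neighbor, which by constraint \ref{cons-grid:left-propagates} of $\mathcal{C}^{\lgrid}$ (or equivalently, because both share the same $x$-coordinate in the grid) again forces $v_{\mathrm{end}}$ to have no right neighbor. Either way, $x(v_{\mathrm{end}}) = w - 1$. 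A symmetric argument, using constraint \ref{constr:vert1} in place of \ref{constr:vert2} and the same grid propagation rule, yields $x(v_{\mathrm{start}}) = 0$.

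To conclude, I use the fact that every step of the two diagonal walks preserves $y - x$, so $y(v_{\mathrm{end}}) = y_0 + (w - 1 - x_0)$ and $y(v_{\mathrm{start}}) = y_0 - x_0$. The grid-validity bounds $y(v_{\mathrm{start}}) \ge 0$ and $y(v_{\mathrm{end}}) \le h - 1$ then combine to give $0 \le y_0 - x_0 \le h - w$, whence $h \ge w$ as desired. The main obstacle I expect will be carefully handling the case split at the endpoints of the two walks: one has to combine the vertical-grid constraints of $\mathcal{C}^{\lvgrid}$ with the grid propagation rules of $\mathcal{C}^{\lgrid}$ in order to conclude that the diagonal truly spans from the leftmost column to the rightmost column, rather than getting stuck somewhere strictly inside the grid.
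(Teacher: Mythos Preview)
Your proposal is correct and follows essentially the same approach as the paper: invoke \Cref{lem:grid} to get an $h \times w$ grid, propagate the label $1$ along the diagonal $\{(x_0+i,y_0+i)\}$ via constraint~\ref{constr:diagonal}, use constraints~\ref{constr:vert1} and~\ref{constr:vert2} to force the diagonal to reach both the leftmost and rightmost columns, and read off $h \ge w$ from the $y$-coordinate bounds. Your write-up is in fact more explicit than the paper's about the endpoint case split (no $\lup$ versus $\lup$-neighbor with no $\lright$, and symmetrically at the other end), which the paper absorbs into the phrase ``by the fact that $G$ is a grid structure''.
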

\begin{proof}
	By \Cref{lem:grid}, the graph is a grid structure, and hence it has some size $h \times w$. We prove that $h \ge w$. By assumption, there is at least one node labeled $1$, and let $(x,y)$ be its coordinates. By the additional constraint \ref{constr:diagonal} of $\mathcal{C}^{\lvgrid}$, for any integer $i$, each node with coordinates $(x+i,y+i)$, if it exists, is also labeled $1$. By constraint \ref{constr:vert1}, and by the fact that $G$ is a grid structure, there exists some $i$ such that the node $(x+i,y+i) = (0,y-x)$ exists, where $y-x \ge 0$. Similarly, by constraint \ref{constr:vert2}, and by the fact that $G$ is a grid structure, there exists some $i$ such that the node $(x+i,y+i) = (w-1,y + w - 1 - x)$ exists, where $y - x + w - 1 \le h -1$. By combining the two inequalities, we obtain $w \le h$, as required.
\end{proof}
\begin{lemma}\label{lem:vgrid2}
	Each vertical grid structure graph $G$ can be $(\mathcal{V^\lvgrid},\mathcal{E^\lgrid})$-labeled such that, at least one node is labeled $1$, and the constraints $\mathcal{C}^{\lvgrid}$ are satisfied at all nodes in $G$. 
\end{lemma}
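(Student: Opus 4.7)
The plan is to start by invoking \Cref{lem:grid2} to obtain a half-edge labeling of $G$ with labels in $\mathcal{E}^{\lgrid}$ satisfying $\mathcal{C}^{\lgrid}$. Since $G$ is a vertical grid structure, it has some dimensions $h \times w$ with $h \ge w$, and the half-edge labeling induces well-defined grid coordinates $(x_u, y_u)$ on the nodes. It remains to produce a valid node labeling from $\mathcal{V}^{\lvgrid} = \{0, 1\}$, ensuring that at least one node is labeled $1$ and that constraints \ref{constr:diagonal}, \ref{constr:vert1}, and \ref{constr:vert2} hold.

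My proposed labeling is the simplest possible choice: label node $u$ with $1$ if $x_u = y_u$, and with $0$ otherwise. Since the node at coordinates $(0,0)$ exists and satisfies $x_u = y_u = 0$, at least one node is labeled $1$. For constraint \ref{constr:diagonal}, observe that if $(x_u, y_u)$ is labeled $1$ then $f(u,\lup,\lright)$, when defined, has coordinates $(x_u+1, y_u+1)$, which still satisfies $x = y$ and is therefore labeled $1$; the argument for $f(u,\ldown,\lleft)$ is symmetric.

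The crux of the proof, and the step where the $h \ge w$ assumption is actually needed, is verifying the two boundary constraints. For constraint \ref{constr:vert1}: if $u$ is labeled $1$ and $f(u,\ldown) = \bot$, then $y_u = 0$; combined with $x_u = y_u$, this forces $x_u = 0$, so $f(u,\lleft) = \bot$ as required. For constraint \ref{constr:vert2}: if $u$ is labeled $1$ and $f(u,\lup) = \bot$, then $y_u = h - 1$ and $x_u = y_u = h - 1$. The point is that for this situation to actually arise on the diagonal, the node $(h-1, h-1)$ must exist in $G$, which requires $h - 1 \le w - 1$, i.e., $w \ge h$; combined with $h \ge w$ this forces $h = w$, and then $x_u = w - 1$, so $f(u, \lright) = \bot$. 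If instead $h > w$, the top-most diagonal node is $(w-1, w-1)$ with $y$-coordinate $w - 1 < h - 1$, so $f(u, \lup)$ is defined and the constraint is vacuous at every diagonal node.

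I do not anticipate any serious obstacle here: the statement is a straightforward existence claim, and the diagonal labeling works essentially for the same reason that it appears in the proof of \Cref{lem:vgrid}. The only subtlety worth pointing out is that one must verify constraint \ref{constr:vert2} by case analysis on whether $h = w$ or $h > w$, since the diagonal only reaches the top row in the former case.
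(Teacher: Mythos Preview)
Your proposal is correct and takes essentially the same approach as the paper: both label exactly the diagonal nodes $x_u = y_u$ with $1$ (the paper describes this procedurally by starting at $(0,0)$ and repeatedly moving to $f(u,\lright,\lup)$, while you describe it declaratively). Your verification of the boundary constraints is more explicit than the paper's, which simply observes that the diagonal walk terminates at a node with no $\lright$-edge because the grid is vertical.
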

\begin{proof}
	We label the nodes as follows. Initially, all nodes are labeled $0$. Then, we start from the node with coordinates $(0,0)$, i.e., the node not having any edge labeled $\ldown$ or $\lleft$, and we iteratively do the following. 
	\begin{itemize}[noitemsep]
		\item Label the current node $u$ with $1$.
		\item Move to $f(u,\lright,\lup)$ if it exists, otherwise stop.
	\end{itemize}
	Since the grid is vertical, this process necessarily ends on some node $u$ that does not have any edge labeled $\lright$, ensuring that the constraints are satisfied on all nodes.
\end{proof}

\section{Graph family \texorpdfstring{\boldmath$\mathcal{G}$}{G}}\label{sec:graph-family}

In this section, we first formally define the family $\mathcal{G}$ of hard instances, and then we define an LCL problem $\Pi^{\lbadgraph}$ that allows nodes to prove that some parts of the graph do not look like valid hard instances.

\begin{definition}[The graph family $\mathcal{G}$]
	A graph $G$ is in $\mathcal{G}$ if and only if $G$ can be constructed by the following process.
	Take an $h\times w$ grid structure $H$, where $h \ge w$, $h = 2^\ell$ for some integer $\ell \ge 0$, and $w > 0$. Take $w$ many tree-like structures $T_i$, where $i\in\{0,1,\dotsc, w-1\}$, of height $\ell$. Put each tree-like structure $T_i$ on top of the $i$th column of the grid structure $H$, that is, identify the node $u\in T_i$ with coordinates $(\ell-1,k_u)$ with the node $v\in H$ with coordinates $(i,k_u)$. 
\end{definition}
An example of a graph in the family $\mathcal{G}$ is depicted in \Cref{fig:hard-instance}.

\subsection{LCL problem \texorpdfstring{\boldmath$\Pi^{\lbadgraph}$}{Pi-badGraph}}

We now define a problem $\Pi^{\lbadgraph}$ that, informally, satisfies the following.
\begin{itemize}
	\item There are two possible types of output, either nodes give an empty output, or nodes produce a proof that the graph is not in the family.
	\item If the graph is in the family, the only valid solution is the one where all nodes produce an empty output.
	\item If the graph is not in the family, nodes can spend $O(\log n)$ time in the \local model to produce a proof of this fact. Moreover, the output can be constructed such that the subgraph induced by nodes producing an empty output satisfies that each connected component is a graph in the family.
\end{itemize}

\paragraph{Input.}
Each node is either marked to be both a grid node and a tree node, or only a tree node. Moreover, if it is a grid and tree node, it also receives an input label from $\mathcal{V}^{\lvgrid}$. That is, the possible inputs are $\mathcal{V^\lbadgraph} = \{ (\ltreenode)\} \cup \{(\ltreenode,\lgridnode,\ell) \mid  \ell \in \mathcal{V}^{\lvgrid}\}$. Each half-edge is either marked to be a grid edge, or a tree edge, or both. Moreover, grid half-edges also receive an input label from $\mathcal{E}^{\lgrid}$, and tree half-edges also receive an input label from $\mathcal{E}^{\ltreelike}$. Additionally, the input satisfies that all grid half-edges that receive the input $\lup$ (resp.\ $\ldown$) are also marked as tree edges labeled $\lright$ (resp.\ $\lleft$). Finally, all grid half-edges that receive the input $\lleft$ or $\lright$ are not marked as tree half-edges.
Hence, more formally, the possible half-edge labels are $\mathcal{E^\lbadgraph} = 
\{(\ltreeedge,\ell) \mid \ell \in \mathcal{E}^{\ltreelike}\} \cup 
\{(\lgridedge,\ell) \mid \ell \in  \{\lleft,\lright\}\} \cup
\{((\lgridedge,\ldown),(\ltreeedge,\lleft))\} \cup
\{((\lgridedge,\lup),(\ltreeedge,\lright))\}
$. An example of a ``good'' input labeling (i.e., that will force all nodes to produce an empty output) for $\Pi^{\lbadgraph}$ is depicted in \Cref{fig:hard-instance-labeled}.

\paragraph{Output.}
On a high level, there are two possible outputs.
\begin{itemize}
	\item Nodes can give an empty output. This is always an option, making the problem trivial.
	\item However, we allow nodes to prove that something is broken in their column, where a column of a node $u$ is defined as the connected component containing $u$ in the subgraph obtained by ignoring grid edges labeled $\lleft$ or $\lright$. A column is considered broken if edges are not consistently labeled, the tree-like structure is not valid, or there is some grid node in the column not satisfying the grid constraints.
\end{itemize}
We will prove that, if a graph belongs to $\mathcal{G}$, the only valid output is the empty output, while if a graph does not belong to $\mathcal{G}$, then there exists a valid output, computable in $O(\log n)$ rounds in the \local model, where the connected components of the subgraph induced by nodes producing an empty output satisfies that each connected component is a graph in $\mathcal{G}$.
More formally, the possible outputs are the following.
\begin{itemize}
	\item $\bot$: used by the nodes to produce an empty output.
	\item $\lerror$: used by the nodes to indicate that the edges are not consistently labeled (for example, if one half-edge of an edge is labeled as $\ltreeedge$, the other half-edge of the same edge must also be labeled $\ltreeedge$, or if a node is not marked as a grid node but it has incident grid edges).
	\item $\ltreeerror$: used by the nodes to indicate that the constraints $\mathcal{C}^{\ltreelike}$ are not satisfied.
	\item $\lgriderror$: used by the nodes to indicate that the constraints $\mathcal{C}^{\lvgrid}$ are not satisfied.
	\item $(\lcolumnerror,\ell)$, where $\ell \in \mathcal{V}^{\lbadtree}$: used by the nodes to indicate that the tree-like structure they belong to (i.e., their column in the grid) contains some error, which can be $\lerror$, $\lgriderror$, or $\ltreeerror$. For technical reasons, we identify the label $(\lcolumnerror,\bot)$ as the same label as $\bot$.
	\item $\lverterror$: used by the nodes to indicate that no node in the column they belong to has label $(\ltreenode,\lgridnode,1)$, i.e., no node in the column has input $1$ from $\mathcal{V}^{\lvgrid}$, meaning that the proof that the grid is vertical is missing.
\end{itemize}

\paragraph{Constraints.}
We now define the constraints of the problem $\Pi^{\lbadgraph}$.

We first define a function $t$ that, given the labeling of a half-edge label, extracts its \emph{type}.
\begin{definition}[Type of half-edge labels and types of edges]
	The type of a half-edge label is defined as the result of applying the function $t$ defined as follows:
  $t((\ltreeedge,\ell)) = \{\ltreeedge\}$, $t((\lgridedge,\ell)) = \{\lgridedge\}$, $t(((\ltreeedge,\ell),(\lgridedge,\ell'))) = \{\ltreeedge,\lgridedge\}$.
  
  Moreover, the type of an edge $e = \{u,v\}$ is defined as $\{\}$ if $t(L_u(e)) \neq t(L_v(e))$, and as $t(L_u(e))$ otherwise.
\end{definition}

Then, we define a function $\fvaluetree$ (resp. $\fvaluegrid$) that takes a label containing type $\ltreeedge$ (resp. $\lgridedge$) and returns the label associated with it.
\begin{definition}[Tree value of a half-edge label]
	The tree value of a half-edge label is defined as the result of applying the function $\fvaluetree$ defined as follows:
	\begin{align*}
		\fvaluetree((\ltreeedge,\ell)) &= \ell, \\
		\fvaluetree(((\ltreeedge,\ell),(\lgridedge,\ell'))) &= \ell,
	\end{align*}
	and it is undefined otherwise.
\end{definition}
\begin{definition}[Grid value of a half-edge label]
	The grid value of a half-edge label is defined as the result of applying the function $\fvaluegrid$ defined as follows:
	\begin{align*}
		\fvaluegrid((\lgridedge,\ell)) &= \ell, \\
		\fvaluegrid(((\ltreeedge,\ell),(\lgridedge,\ell'))) &= \ell',
	\end{align*}
	and it is undefined otherwise.
\end{definition}
We are now ready to define the constraints  $\mathcal{C}^{\lbadgraph}$ of $\Pi^{\lbadgraph}$.

\begin{myframe}{The constraints $\mathcal{C}^{\lbadgraph}$}
\begin{enumerate}
	\item The output $\bot$ is always allowed.
	\item A node $u$ can output $\lerror$ if it has an incident edge $e = \{u,v\}$ satisfying that 
	$t(L_u(e)) \neq t(L_v(e))$, or if its input is $(\ltreenode)$ and it has at least one incident half-edge with a type containing $\lgridedge$.\label{constr:badgraph:error}
	\item Consider the graph $G'$ induced by edges that have a type containing $\ltreeedge$. Consider the labeling of the half-edges of $G'$ obtained by mapping each half-edge $(u,e)$ (satisfying $u \in e$) into $\fvaluetree(L_u(e))$. A node can output $\ltreeerror$ if, in $G'$,  the constraints $\mathcal{C}^{\ltreelike}$ are not satisfied.\label{constr:badgraph:treeerror}
	\item Consider the graph $G'$ induced by edges that have a type containing $\lgridedge$. Consider the labeling of the half-edges of $G'$ obtained by mapping each half-edge $(u,e)$ (satisfying $u \in e$) into $\fvaluegrid(L_u(e))$. A node can output $\lgriderror$ if its input is not $(\ltreenode)$ and, in $G'$,  the constraints $\mathcal{C}^{\lgrid}$ are not satisfied.\label{constr:badgraph:griderror}
	\item Consider the graph $G'$ induced by edges that have a type containing $\ltreeedge$. Consider the following labeling of $G'$: \label{constr:badgraph:badtree}
	\begin{itemize}
		\item The input labeling of the half-edges of $G'$ is obtained by mapping each half-edge $(u,e)$ (satisfying $u \in e$) into $\fvaluetree(L_u(e))$.
		\item The output labeling of the nodes of $G'$ is obtained by mapping each node $u$ labeled $(\lcolumnerror,\ell)$ into $\ell$, each node $u$ labeled $\lerror$, $\ltreeerror$, or $\lgriderror$ into $\lerror$, and each other node into $\bot$.
		\item The input labeling of the nodes is obtained by labeling $1$ nodes that have an output in $\{\lerror, \ltreeerror, \lgriderror\}$, and $0$ all other nodes.
	\end{itemize}
	Then, on $G'$, the constraints of $\mathcal{C}^{\lbadtree}$ must be satisfied.
	\item Consider the graph $G'$ induced by edges that have a type containing $\ltreeedge$. If a node has output $\lverterror$, then all its neighbors in $G'$ must also have $\lverterror$ as output. Moreover, if a node has output $\lverterror$, then it must not have input $(\ltreenode,\lgridnode,1)$. \label{constr:badgraph:verterror}
\end{enumerate}
\end{myframe}

\paragraph{Properties of the problem.}
We now prove some useful properties about the problem $\Pi^{\lbadgraph}$.
\begin{lemma}\label{lem:badgraph-bot-everywhere}
	Let $G$ be a graph in $\mathcal{G}$. There exists a $(\mathcal{V^\lbadgraph},\mathcal{E^\lbadgraph})$-labeling of $G$ satisfying that the only valid output labeling for $\Pi^{\lbadgraph}$ is the one assigning $\bot$ to all nodes.
\end{lemma}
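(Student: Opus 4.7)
The plan is to construct an explicit input labeling of $G$ by composing the canonical labelings of its grid and tree-like pieces, and then eliminate each non-$\bot$ output type one by one using the constraints of $\Pi^{\lbadgraph}$.

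First I would produce the input. Since $G\in\mathcal{G}$, it decomposes as an $h\times w$ vertical grid $H$ with tree-like structures $T_0,\dots,T_{w-1}$ whose leaf layers are identified with the columns of $H$. I apply Lemma~\ref{lem:vgrid2} to get a $(\mathcal{V}^{\lvgrid},\mathcal{E}^{\lgrid})$-labeling of $H$ satisfying $\mathcal{C}^{\lvgrid}$, and Lemma~\ref{lem:treelike2} to label each $T_i$ with $\mathcal{E}^{\ltreelike}$ satisfying $\mathcal{C}^{\ltreelike}$. Crucially, the construction in the proof of Lemma~\ref{lem:vgrid2} labels precisely the diagonal $(0,0),(1,1),\dots,(w-1,w-1)$ with $1$, so \emph{every column} of $H$ contains a node with label $1$. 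I then package these into a $(\mathcal{V^\lbadgraph},\mathcal{E^\lbadgraph})$-labeling: internal tree nodes get input $(\ltreenode)$; each shared leaf/grid node gets $(\ltreenode,\lgridnode,\ell)$ where $\ell$ comes from the vertical grid labeling; purely tree or purely horizontal grid half-edges get the obvious single-type labels; and each vertical grid edge, which is simultaneously a leaf-layer sibling edge in some $T_i$, gets the paired label $((\lgridedge,\lup),(\ltreeedge,\lright))$ or $((\lgridedge,\ldown),(\ltreeedge,\lleft))$. The orientation of the identification makes the pairings $\lup\leftrightarrow\lright$ and $\ldown\leftrightarrow\lleft$ consistent, so this lies in $\mathcal{E^\lbadgraph}$.

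Next I rule out every non-$\bot$ output in turn. For constraint~\ref{constr:badgraph:error}: edge types match on both sides everywhere, and no $(\ltreenode)$-input node touches a grid half-edge, so no node may output $\lerror$. For constraints~\ref{constr:badgraph:treeerror} and~\ref{constr:badgraph:griderror}: the subgraph induced by edges of tree type is the disjoint union of the $T_i$'s with their $\mathcal{C}^{\ltreelike}$-valid labeling, and the subgraph induced by edges of grid type is exactly $H$ with its $\mathcal{C}^{\lgrid}$-valid labeling (since $\mathcal{C}^{\lvgrid}\supseteq\mathcal{C}^{\lgrid}$), so neither $\ltreeerror$ nor $\lgriderror$ is available. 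Combining these, in the $\Pi^{\lbadtree}$ instance derived from constraint~\ref{constr:badgraph:badtree} no node is marked and each connected component of the tree subgraph is a valid tree-like structure, so Lemma~\ref{lem:valid-sol-for-badtree} forces the derived output to be $\bot$ everywhere; this rules out $(\lcolumnerror,\ell)$ for every $\ell\neq\bot$ (and $(\lcolumnerror,\bot)$ is identified with $\bot$).

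The last and subtlest step is ruling out $\lverterror$. By constraint~\ref{constr:badgraph:verterror}, $\lverterror$ propagates across tree edges, so if some node of $T_i$ had this output then all nodes of $T_i$ would. But this is where the diagonal construction pays off: the leaf of $T_i$ at tree coordinate $(\ell-1,i)$ is identified with the grid node $(i,i)$, which lies on the diagonal and therefore has input $(\ltreenode,\lgridnode,1)$; the same constraint forbids such a node from outputting $\lverterror$, a contradiction. I expect this ``every tree component needs a $1$'' argument to be the main obstacle, since the bare statement of Lemma~\ref{lem:vgrid2} only guarantees one $1$-labeled node in total, and the argument relies on the stronger structural fact that the specific construction actually touches every column. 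Having excluded all other outputs, the all-$\bot$ labeling is forced, and it is valid by constraint~\ref{constr:pibad-bot} (the always-$\bot$ rule), completing the proof.
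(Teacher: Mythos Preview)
Your proof is correct and follows essentially the same approach as the paper: construct the input labeling by combining the canonical labelings from Lemmas~\ref{lem:vgrid2} and~\ref{lem:treelike2}, then eliminate the outputs $\lerror$, $\ltreeerror$, $\lgriderror$, $\lverterror$, and $(\lcolumnerror,\ell)$ one by one, invoking Lemma~\ref{lem:valid-sol-for-badtree} for the last.

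Your treatment of $\lverterror$ is in fact more careful than the paper's. The paper only argues that \emph{at least one} node has input $(\ltreenode,\lgridnode,1)$ and then appeals to propagation, which strictly speaking rules out $\lverterror$ only in the single tree-like component containing that node. You correctly observe that the propagation in constraint~\ref{constr:badgraph:verterror} is confined to each $T_i$ separately, so one needs a $1$-labeled node in \emph{every} column; and you supply this by unpacking the diagonal construction inside the proof of Lemma~\ref{lem:vgrid2}, which indeed hits columns $0,\dots,w-1$ since $h\ge w$. This is the right fix.
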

\begin{proof}
	Recall that any $G \in \mathcal{G}$ is obtained by starting from a vertical grid structure and then connecting a tree-like structure on top of each column. 
	For each node $u$ that is part of the grid structure, let $\ell_g(u)$ be the label of node $u$ obtained by \Cref{lem:vgrid2}. For each half-edge $(u,e)$ that is part of the grid structure, let $\ell_g(u,e)$ be the label of the half-edge $(u,e)$ obtained by \Cref{lem:vgrid2}. For each half-edge that is part of the tree-like structure, let $\ell_t(u,e)$ be the label of the half-edge $(u,e)$ given by \Cref{lem:treelike2}. For each node $u$ of $G$, if it is only part of a tree-like structure, we assign the label $(\ltreenode)$, while if $u$ is part of both a tree-like structure and the grid, we assign the label $(\ltreenode,\lgridnode,\ell_g(u))$.
	Then, to each half-edge $(u,e)$ that is only part of the tree-like structure we assign the label $(\ltreeedge,\ell_t(u,e))$, to each half-edge $(u,e)$ that is only part of the grid structure we assign the label $(\lgridedge,\ell_g(u,e))$, while to all other half-edges we assign the label $((\lgridedge,\ell_g(u,e)),(\ltreeedge,\ell_t(u,e)))$.
	
	By construction of the input labeling, the outputs $\lerror$, $\ltreeerror$, and $\lgriderror$ are not allowed. Moreover, since by \Cref{lem:treelike2} at least one node is labeled $(\ltreenode,\lgridnode,1)$, the output $\lverterror$ is not allowed on that node, and by a propagation argument no node belonging to the same tree-like structure can use the output $\lverterror$. 
	
	We thus get that the only possible outputs are $\bot$ or $(\lcolumnerror,\ell)$ for some $\ell$. We prove that $\ell$ must be $\bot$, implying the claim.
	The fact that no node is allowed to output $\lerror$, $\ltreeerror$, $\lgriderror$, or $\lverterror$, implies, by constraint \ref{constr:badgraph:badtree} of $\mathcal{C}^{\lbadgraph}$, that the input labeling for $\Pi^{\lbadtree}$ satisfies that all nodes receive $0$ as input. Since the tree-like structure is valid, by \Cref{lem:valid-sol-for-badtree}, the only way to satisfy $\mathcal{C}^{\lbadtree}$ is for all nodes to output $\bot$ as output for $\Pi^{\lbadtree}$.
\end{proof}
\begin{lemma}\label{lem:solve-badgraph}
	For any $(\mathcal{V^\lbadgraph},\mathcal{E^\lbadgraph})$-labeled graph $G$, there exists a solution for $\Pi^{\lbadgraph}$ satisfying the following.
	\begin{itemize}[noitemsep]
		\item The solution can be computed in $O(\log n)$ rounds in the \local model.
		\item The graph induced by nodes that output $\bot$ satisfies that each connected component is a graph in~$\mathcal{G}$.
	\end{itemize}
\end{lemma}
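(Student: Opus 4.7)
The plan is to exhibit an $O(\log n)$-round algorithm, verify that its output satisfies $\mathcal{C}^{\lbadgraph}$, and then show that each connected component of the $\bot$-subgraph lies in $\mathcal{G}$. First, in $O(1)$ rounds each node performs the three local checks that can trigger $\lerror$, $\ltreeerror$, or $\lgriderror$ (edge-type consistency, local violations of $\mathcal{C}^{\ltreelike}$ in the tree-edge subgraph, and local violations of $\mathcal{C}^{\lvgrid}$ in the grid-edge subgraph); call a node \emph{bad} if any of these fires. Consider the connected components of the tree-edge subgraph. For each such component $C$, interpret the bad nodes as the marked inputs to $\Pi^{\lbadtree}$ and invoke \Cref{lem:prove-tree-invalid}; this produces in $O(\log n)$ rounds a $\Pi^{\lbadtree}$ solution on $C$ whose outputs are either all $\bot$ (by \Cref{lem:valid-sol-for-badtree}, precisely when $C$ is a valid tree-like structure with no bad node) or all non-$\bot$. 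In the non-$\bot$ case, each bad node outputs the local error label it witnessed and every other node wraps its $\Pi^{\lbadtree}$ label inside $(\lcolumnerror,\cdot)$. In the all-$\bot$ case, the algorithm broadcasts within $C$ (whose diameter is $O(\log n)$ since $C$ is a valid tree-like structure) to determine whether any node carries input label $(\ltreenode,\lgridnode,1)$: if yes, every node of $C$ outputs $\bot$; otherwise every node of $C$ outputs $\lverterror$.

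Output validity is then routine: constraints \ref{constr:badgraph:error}, \ref{constr:badgraph:treeerror}, \ref{constr:badgraph:griderror} hold because each error label is emitted only by a node witnessing the matching local violation; constraint \ref{constr:badgraph:badtree} follows per component from either \Cref{lem:prove-tree-invalid} or \Cref{lem:valid-sol-for-badtree}; constraint \ref{constr:badgraph:verterror} is respected because $\lverterror$ is emitted uniformly inside a tree-edge component containing no $(\ltreenode,\lgridnode,1)$-labeled node.

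The main step, which I expect to be the central obstacle, is showing that each connected component $G'$ of the $\bot$-subgraph lies in $\mathcal{G}$. Note that entire tree-edge components are either fully $\bot$ or fully non-$\bot$, so $G'$ is a disjoint union of $\bot$-tree-components joined by horizontal grid edges; on $G'$ every local constraint of $\mathcal{C}^{\ltreelike}$ and $\mathcal{C}^{\lvgrid}$ is satisfied, edge types are consistent, and each column contains a label-$1$ node. The ``vertical corner'' of the grid part of $G'$ is inherited from the endpoints of the bottom chains of the tree-like structures via the half-edge identification (a tree leaf without a $\lleft$ tree half-edge is a grid node without a $\ldown$ half-edge), so \Cref{lem:grid} combined with \Cref{lem:vgrid} yields that the grid part of $G'$ is a vertical grid structure with properly-attached tree-like structures, hence $G' \in \mathcal{G}$.

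The subtle part is to rule out \emph{horizontal} wraparound of $G'$, i.e., to guarantee the horizontal corner that \Cref{lem:grid} demands. Suppose for contradiction that $G'$ wraps into a cylinder. The label-$1$ node inside some tree-edge component of $G'$ would then, by the diagonal-propagation constraint \ref{constr:diagonal}, force label-$1$ to propagate along its diagonal until it reaches the top row, at which point constraint \ref{constr:vert2} is violated because in a cylinder the $\lright$ half-edge always exists due to the wraparound. Any such local violation of $\mathcal{C}^{\lvgrid}$ is caught by the initial local check, so the tree-edge component containing the violating node cannot have had all its nodes outputting $\bot$, contradicting its inclusion in $G'$.
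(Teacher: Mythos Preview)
Your proposal is correct and follows essentially the same approach as the paper: the algorithm (local error checks, then $\Pi^{\lbadtree}$ on tree-edge components via \Cref{lem:prove-tree-invalid}, then the $\lverterror$ broadcast) and the analysis (whole columns survive, vertical corners come from the tree leaves, the diagonal of label-$1$ nodes rules out horizontal wraparound) match the paper's proof. The only cosmetic difference is that the paper excludes the cylinder by propagating the label-$1$ diagonal \emph{down-left} and invoking constraint~\ref{constr:vert1}, whereas you propagate \emph{up-right} and invoke constraint~\ref{constr:vert2}; these are dual and equally valid, and the one point the paper makes explicit that you leave implicit---that the $\mathcal{C}^{\lvgrid}$ constraints checked in $G$ remain satisfied after restricting to whole-column subgraphs---is needed to apply \Cref{lem:vgrid} to $G'$ but is a routine observation.
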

\begin{proof}
	We present an algorithm that satisfies the requirements of the lemma.
	At first, each node $u$ spends $O(1)$ rounds to check whether there is some local inconsistency in the structure of the graph. In particular, each node $u$ first checks whether constraint \ref{constr:badgraph:error} applies, and in that case it outputs $\lerror$. Then, it checks whether constraint \ref{constr:badgraph:treeerror} applies, and in that case it outputs $\ltreeerror$. Then, it checks whether constraint \ref{constr:badgraph:griderror} applies, and in that case it outputs $\lgriderror$. 
	
	Then, nodes construct the input for $\Pi^{\lbadtree}$ by marking themselves $1$ if they have output $\lerror$, $\ltreeerror$, or $\lgriderror$, and $0$ otherwise. In each connected component obtained by ignoring grid edges labeled $\lleft$ or $\lright$, by \Cref{lem:prove-tree-invalid}, nodes can then spend $O(\log n)$ rounds to produce a solution of $\Pi^{\lbadtree}$ such that, if the tree-like structure they belong to is invalid or some node is marked, then no node of the connected component outputs $\bot$, while if the tree-like structure they belong to is valid and no node is marked, then all nodes output $\bot$. A node that obtained output $\ell$ outputs $(\lcolumnerror,\ell)$. Recall that $(\lcolumnerror,\bot) = \bot$. Note that this output satisfies the requirements of constraint \ref{constr:badgraph:badtree}.
	
	Finally, each node $u$ that output $\bot$ in the previous step, can spend $O(\log n)$ rounds to gather the whole tree-like structure it belongs to (since a valid tree-like structure has diameter $O(\log n)$). If $u$ does not see any node labeled $(\ltreenode,\lgridnode,1)$, it changes its output to $\lverterror$. Since this operation is done consistently by all nodes, constraint \ref{constr:badgraph:verterror} is satisfied. 
	
	We obtained an algorithm that produces a correct output for $\Pi^{\lbadgraph}$ in $O(\log n)$ rounds. We now prove that the output satisfies the second requirement of the lemma.
	Summarizing the above, we get that if a node $u$ outputs $\bot$ then:
	\begin{itemize}[noitemsep]
		\item the grid constraints are locally satisfied;
		\item the tree-like structure containing $u$ is valid;
		\item by a propagation argument, the tree-like structure is correctly connected to the whole column of the grid;
		\item all the nodes in such a column also output $\bot$;
		\item the column of the grid contains at least one node labeled $1$.
	\end{itemize} 
	Hence, the subgraph $G'$ induced by nodes that output $\bot$ is composed of whole columns of the grid with tree-like structures correctly attached. Moreover, if two such columns are connected, then the additional constraints of $\mathcal{C}^{\lvgrid}$ are also satisfied.
	
	Consider a connected component $G''$ in $G'$. Since the tree-like structures are valid and properly connected to the columns, no column can \emph{wrap around}, i.e., for each column there is at least one grid node not having any half-edge labeled $\ldown$ and at least one grid node not having any half-edge labeled $\lup$. We now prove that, in the connected component $G''$, there is at least one node not having any incident half-edge labeled $\lleft$ (i.e., the grid does not wrap around horizontally). Consider an arbitrary column $c$ of $G''$, and let $u$ be an arbitrary node labeled $1$ in $c$ (such a node, by assumption, exists), and let $i$ be the vertical coordinate of $u$ in $c$. Consider the column $c'$ obtained by starting from $u$ and moving left for $i$ steps. By the definition of $\mathcal{C}^{\lvgrid}$, the column $c'$ must satisfy that the node with vertical coordinate $0$ has input $1$, which, again by the definition of $\mathcal{C}^{\lvgrid}$, implies that the node does not have any half-edge labeled $\lleft$, as desired.
	
	Recall that, when nodes of $G''$ checked whether the constraints of $\mathcal{C}^{\lvgrid}$ were satisfied, they did it on $G$. However, it is easy to see that the constraints $\mathcal{C}^{\lvgrid}$ satisfy a special property: if a node $u$ satisfies $\mathcal{C}^{\lvgrid}$, and an entire column is removed from the left or the right of $u$, then the constraints are still satisfied on $u$. This implies that, even if we restrict to $G''$, the constraints $\mathcal{C}^{\lvgrid}$ are still satisfied on all nodes of $G''$. Hence, we now have all the requirements for applying \Cref{lem:vgrid} and proving that the nodes of $G''$ labeled as grid nodes form a vertical grid structure, which implies our claim.
\end{proof}

\section{LCL problem \texorpdfstring{\boldmath$\Pi$}{Pi}}\label{sec:problem-pi}
On a high level, the problem $\Pi$ will be defined in such a way that it requires nodes to solve $\Pi^{\lbadgraph}$, and then, additionally, nodes that output $\bot$ for $\Pi^{\lbadgraph}$ are required to solve an additional problem. Such a problem requires non-local coordination.

\paragraph{Input.}
The set of input labels of the nodes is $\mathcal{V}^\Pi=\mathcal{V^\lbadgraph}\times \{0,1\}$, that is, nodes receive an input of the LCL $\Pi^\lbadgraph$ and an additional bit. The set of input labels of half-edges is $\mathcal{E}^\Pi=\mathcal{E^\lbadgraph}$. 

\paragraph{Output.}
The set of output labels of the nodes contains the following labels.
\begin{itemize}
	\item All labels in $\mathcal{V^\lbadgraph}\setminus\{\bot\}$. A valid solution for $\Pi$ will be to solve $\Pi^{\lbadgraph}$ without using the label $\bot$. In this case, nothing additional will be required.
	\item All labels in $\{0,1\} \times \{\lyes,\lno\}$. These outputs will be used by grid nodes.
	\item The labels $\lyes$ and $\lno$, which will be used by nodes that are not in the grid.
\end{itemize} 

\paragraph{Constraints.}
We now define the node constraints $\mathcal{C}^\Pi$ of $\Pi$.
\begin{myframe}{The constraints $\mathcal{C}^{\Pi}$}
\begin{enumerate}
	\item Consider the output labeling given by mapping all labels not in $\mathcal{V^\lbadgraph}$ to $\bot$. The constraints of $\Pi^\mathcal{\lbadgraph}$ must be satisfied.\label{constr:pi:bad}
	
	\item If a node is labeled $(\ltreenode,\lgridnode,\ell)$ for some $\ell$ and its output is not in $\mathcal{V^\lbadgraph}$, then its output must be in  $\{0,1\} \times \{\lyes,\lno\}$. \label{constr:pi:grid}
	
	\item Let $u$ be a node with output $(b,x) \in \{0,1\} \times \{\lyes,\lno\}$. Let $v$ be $f(u,(\lgridedge,\lright))$. If $v$ exists, the output of $v$ must be either a label in $\mathcal{V^\lbadgraph}$ or $(b,x')$ for some $x'$. \label{constr:pi:grid-same-output}
	
	\item Let $u$ be a node with input $(\ell,b_\mathrm{in})$ for some $\ell$, output $(b_\mathrm{out},x) \in \{0,1\} \times \{\lyes,\lno\}$, and such that $f(u,(\lgridedge,\lright)) = \bot$ (i.e., $u$ does not have a right neighbor in the grid). It must hold that $x = \lyes$ if and only if $b_\mathrm{in} = b_\mathrm{out}$. \label{constr:pi:basecase}
	
	\item If a node is labeled $(\ltreenode)$ and its output is not in $\mathcal{V^\lbadgraph}$, then its output must be in $\{\lyes, \lno\}$. \label{constr:pi:tree}

	\item Let $u$ be a node with output $x_u \in \{\lyes,\lno\}$, i.e., it is a node that does not belong to the grid, but it belongs to a tree-like structure and did not give an output from $\mathcal{V^\lbadgraph}$. Let $v$ be the node $f(u,(\ltreelike,\llch))$, and let $z$ be the node $f(u,(\ltreelike,\lrch))$. It must hold that the output of $v$ is either $x_v \in \{\lyes,\lno\}$, or $(b_v,x_v) \in \{0,1\} \times \{\lyes,\lno\}$. Similarly, the output of $z$ is either $x_z \in \{\lyes,\lno\}$, or $(b_z,x_z) \in \{0,1\} \times \{\lyes,\lno\}$. It must hold that $x_u = \lyes$ if and only if at least one of $x_v$ or $x_z$ is $\lyes$. \label{constr:pi:or}
	
	\item Let $u$ be a node with output $x_u \in \{\lyes,\lno\}$ and such that $f(u,(\ltreelike,\lparent)) = \bot$. Then, $x_u$ must be $\lyes$. \label{constr:pi:yes}
\end{enumerate}
\end{myframe}
\noindent An example of valid output is shown in \Cref{fig:solution}.

\paragraph{\boldmath Properties of $\Pi$.}
We now characterize what are the valid solutions for $\Pi$, when we consider hard instances.
\begin{lemma}\label{lem:pi-good-sol}
	Let $G \in \mathcal{G}$. It is possible to label $G$ such that the only valid solutions for $\Pi$ satisfy the following.
	\begin{itemize}
		\item For each row of the grid structure, all nodes in that row output the same bit.
		\item There exists at least one row satisfying that the output bit given by the nodes is the same as the one provided as input to the right-most node in the row.
	\end{itemize}
\end{lemma}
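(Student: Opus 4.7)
The plan is to first apply \Cref{lem:badgraph-bot-everywhere} to equip $G$ with an input labeling for which the only valid $\Pi^{\lbadgraph}$-output is the all-$\bot$ assignment. Constraint \ref{constr:pi:bad} requires that replacing every non-$\mathcal{V^\lbadgraph}$ label of a $\Pi$-output by $\bot$ yields a valid $\Pi^{\lbadgraph}$-output; since $\bot$ is not itself a $\Pi$-output label, the only way to recover the all-$\bot$ $\Pi^{\lbadgraph}$-solution is for every node to output a label \emph{outside} $\mathcal{V^\lbadgraph}$. In particular no node can emit any of the ``error''-style labels in $\mathcal{V^\lbadgraph}\setminus\{\bot\}$. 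Combined with constraints \ref{constr:pi:grid} and \ref{constr:pi:tree}, this forces every node labeled $(\ltreenode,\lgridnode,\ell)$ to output a pair $(b,x)\in\{0,1\}\times\{\lyes,\lno\}$ and every node labeled $(\ltreenode)$ to output a value in $\{\lyes,\lno\}$.

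The first bullet then follows by chasing the bit coordinate along rows. Constraint \ref{constr:pi:grid-same-output} forces that following a $(\lgridedge,\lright)$-edge preserves the bit $b$; iterating along any row of the grid shows that all grid nodes in that row must share the same output bit.

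For the second bullet I focus on the tree-like structure $T_{w-1}$ attached to the rightmost column. Constraint \ref{constr:pi:or} makes the $\lyes/\lno$ value of every internal tree-only node equal to the OR of the corresponding values of its two children (where a child may be another tree-only node or a grid leaf contributing the second coordinate of its output pair), and constraint \ref{constr:pi:yes} pins the root's value to $\lyes$ (since the root has no $(\ltreelike,\lparent)$-edge). A routine induction on depth then produces at least one leaf of $T_{w-1}$ with $x=\lyes$. Such a leaf is a grid node of the rightmost column, so it has no $(\lgridedge,\lright)$-neighbor, and constraint \ref{constr:pi:basecase} forces its output bit to coincide with its input bit. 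Combined with the row-sharing property established above, this yields a row whose common output bit matches the input bit of its rightmost node, as required.

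The main subtlety is in the very first step: one has to rule out the possibility that some node escapes the constraints of $\Pi^{\lhard}$ by emitting a ``error'' label from $\mathcal{V^\lbadgraph}\setminus\{\bot\}$ (which is an available $\Pi$-output), and this is precisely what \Cref{lem:badgraph-bot-everywhere} buys us: on the chosen input labeling every non-$\bot$ $\Pi^{\lbadgraph}$-solution is \emph{invalid}, so the collapse in constraint \ref{constr:pi:bad} forbids such labels everywhere. Once this bookkeeping is settled, the remainder of the argument is a direct combination of horizontal propagation along grid rows and an OR-induction up the rightmost column's tree.
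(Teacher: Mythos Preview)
Your proof is correct and follows essentially the same approach as the paper: invoke \Cref{lem:badgraph-bot-everywhere} to force every node's $\Pi$-output outside $\mathcal{V^\lbadgraph}\setminus\{\bot\}$, then use constraints~\ref{constr:pi:grid} and~\ref{constr:pi:grid-same-output} for row-consistency and constraints~\ref{constr:pi:basecase}, \ref{constr:pi:or}, \ref{constr:pi:yes} for the OR-propagation down the rightmost tree. Your write-up is in fact slightly more explicit than the paper's about why constraint~\ref{constr:pi:bad} rules out the error labels (noting that $\bot$ is not itself a $\Pi$-output label, so the collapse map can only produce all-$\bot$ if no node used a $\mathcal{V^\lbadgraph}$ label).
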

\begin{proof}
	By \Cref{lem:badgraph-bot-everywhere}, there exists a $(\mathcal{V^\lbadgraph},\mathcal{E^\lbadgraph})$-labeling of $G$ where the only output labeling satisfying  $\mathcal{C}^{\lbadgraph}$ is the one assigning $\bot$ to all nodes. Consider this labeling as input for the problem $\Pi$. Nodes cannot use any output from $\mathcal{V^\lbadgraph}\setminus\{\bot\}$.
	
	By constraint \ref{constr:pi:grid} of $\mathcal{C}^\Pi$, each node of the grid must output a pair $\{0,1\} \times \{\lyes,\lno\}$, and by constraint \ref{constr:pi:grid-same-output} of $\mathcal{C}^\Pi$ for each row of the grid it must hold that the first element of the pairs given by the nodes must be the same, i.e., all $0$ or all $1$, which implies the first property of the lemma.
	
	By constraint \ref{constr:pi:basecase}, a grid node of the right-most column is allowed to output $(b,\lyes)$ for some $b$ only if $b$ matches the bit received as input. 
	Then, by constraints \ref{constr:pi:tree} and \ref{constr:pi:or}, a node of a tree-like structure is allowed to output $\lyes$ only if at least one of its children outputs also $\lyes$. Since, by 
	constraint \ref{constr:pi:yes}, the root of the tree-like structure in the right-most column must output $\lyes$, we get that at least one node of the right-most column outputs $(b,\lyes)$ for some $b$, which implies the second property of the lemma.
\end{proof}

\section{Complexity in the \local model}\label{sec:pi-complexity}

In this section we will show that our problem $\Pi$ is indeed hard in the randomized \local model with private randomness, but easy with shared randomness.

\subsection{Complexity with private randomness}

\begin{theorem}\label{th:private-rand}
	In the \local model with private randomness, solving the problem $\Pi$ with success probability at least $1 - 1/n$ requires $\Omega(\sqrt{n})$ rounds.
\end{theorem}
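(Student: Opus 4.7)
The plan is to exhibit, for each sufficiently large $n$, a hard instance $G \in \mathcal{G}$ together with an adversarial input on which any algorithm running in $T < (m-1)/2$ rounds must fail with probability strictly greater than $1/n$, where $m = \Theta(\sqrt{n})$ is the side length of the grid. I would take $G$ to be an $m \times m$ hard instance with $m$ a power of $2$, so that $n = \Theta(m^2)$, and apply \cref{lem:pi-good-sol} to choose the $\Pi^{\lbadgraph}$ part of the input so that any valid solution must assign the same output bit $c_i$ to every grid node of row $i$ and guarantee at least one $i$ with $c_i$ equal to the input bit $b_{v_i}$ of the rightmost node $v_i$ of that row. For each row $i$, let $u_i$ be the leftmost grid node of row $i$. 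Because each tree-like structure is attached to a single column and does not connect to other columns, the only paths in $G$ from $u_i$ to $v_i$ have length at least $m-1$, so their $T$-neighborhoods are disjoint whenever $T < (m-1)/2$.

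Next, I would fix the bits $b_u$ for all non-rightmost nodes arbitrarily (e.g.~all $0$) and analyze the output distribution for this partial input. In $T$ rounds the output of $u_i$ depends only on the inputs and private random coins inside $B_T(u_i)$, and similarly for $v_i$; by disjointness, the two outputs are independent random variables. If the algorithm succeeds with probability $\geq 1-1/n$, then the first coordinate of $u_i$'s output must equal that of $v_i$'s output with probability $\geq 1-1/n$. A short calculation on $p(1-q)+(1-p)q \leq 1/n$ for independent Bernoulli variables forces both $u_i$ and $v_i$ to put mass $\geq 1-2/n$ on a common value $\hat c_i \in \{0,1\}$. Crucially, $\hat c_i$ is a function of the inputs inside $B_T(u_i)$, which contains no node of the rightmost column and in particular not $v_i$; hence $\hat c_i$ does not depend on $b_{v_i}$.

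Having computed each $\hat c_i$, the adversary sets $b_{v_i} := 1 - \hat c_i$ for every row $i$. This is consistent, since modifying the rightmost column does not affect any $\hat c_i$. By the concentration bound, the probability that the first coordinate of $u_i$'s output coincides with $b_{v_i}$ is at most $2/n$. Since \cref{lem:pi-good-sol} implies every successful run must have at least one row in which $u_i$'s first output coordinate equals $b_{v_i}$, a union bound over the $m$ rows gives
\[
  \Pr[\text{algorithm succeeds}] \;\leq\; m \cdot \frac{2}{n} \;=\; O\!\left(\frac{1}{\sqrt{n}}\right),
\]
which is strictly smaller than $1-1/n$ for all sufficiently large $n$, contradicting the assumed success probability. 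Hence $T \geq (m-1)/2 = \Omega(\sqrt{n})$.

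The main technical obstacle is the concentration step: I must argue that disjoint $T$-neighborhoods together with the row-consistency requirement force the output distribution of $u_i$ to be almost a deterministic function of the \emph{left-side} inputs only, so that the adversary's choice of rightmost bits cannot in hindsight shift the mode. Care is also needed to verify the graph-distance claim $d(u_i,v_i) = m-1$: one must check that the tree-like structures, while providing $O(\log n)$ intra-column shortcuts, do not provide any inter-column shortcut, which follows directly from the construction of $\mathcal{G}$ because each tree is attached to exactly one grid column.
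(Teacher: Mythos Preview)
Your proposal is correct and follows essentially the same approach as the paper: both arguments use a square hard instance, invoke \cref{lem:pi-good-sol}, exploit the independence of the leftmost and rightmost nodes in each row to force near-deterministic output bits, set the rightmost inputs adversarially to the complementary bit, and finish with a union bound yielding success probability $O(1/\sqrt{n})$. Your write-up is in fact slightly more careful than the paper's in justifying the distance claim $d(u_i,v_j)\ge m-1$ (via the observation that tree-like structures give only intra-column shortcuts) and in making explicit that $\hat c_i$ depends only on $B_T(u_i)$ and hence not on the rightmost-column inputs.
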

\begin{proof}
We consider the subfamily of graphs in $\mathcal{G}$ satisfying that the dimensions $w$ and $h$ of the grid satisfy $w = h$, and we apply \Cref{lem:pi-good-sol}. 

	Assume for a contradiction that there exists an algorithm $\mathcal{A}$ that solves $\Pi$ in $o(\sqrt{n})$ rounds. 
 
	Let $N$ be such that, for any $n\ge N$, the time complexity of $\mathcal{A}$
	in graphs of size $n$ is at most $w/3$. Consider a row, its left-most node $u$ and its right-most node $v$. By the assumption on the runtime of the algorithm, their outputs are independent. 
	Let $p_u$ (resp.\ $p_v$) be the probability that $u$ (resp.\ $v$) outputs $0$. It must hold that $p_u (1 - p_v) < 1/n$ and that $(1- p_u)p_v < 1/n$, since otherwise, the algorithm would produce a row that does not have the same bit everywhere (which contradicts \Cref{lem:pi-good-sol}) with too large probability. This implies that either both $p_u$ and $p_v$ are at most $2/n$ or that they are both at least $1 - 2/n$.
	
	We now restrict to instances where, for each row with left-most node $u$ and right-most node $v$, the input of $v$ is $0$ if $p_u \le 2/n$, and $1$ otherwise. On these instances, the success probability of the algorithm is upper bounded by the probability that at least one left-most node picks its least probable output, which, since the number of rows is upper bounded by $O(\sqrt{n})$, happens with probability at most $c / \sqrt{n}$ for some constant $c$. This probability, for large enough $n$, is strictly smaller than $1 - 1/n$, implying that the algorithm fails with too large probability.
\end{proof}

\subsection{Complexity with shared randomness}
\begin{theorem}\label{thm:ub-shared-rand}
	In the \local model with shared randomness, the problem $\Pi$ can be solved in $O(\log n)$ rounds, with success probability $1 - 1/n^c$ for any constant $c$.
\end{theorem}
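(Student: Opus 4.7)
The plan is to layer two simple algorithms on top of \Cref{lem:solve-badgraph}. As a first step, each node runs the $O(\log n)$-round algorithm of \Cref{lem:solve-badgraph} to produce a valid output for $\Pi^{\lbadgraph}$. Any node that outputs a non-$\bot$ label in $\mathcal{V^\lbadgraph}$ simply commits to that label as its final $\Pi$-output; constraint~\ref{constr:pi:bad} of $\mathcal{C}^\Pi$ is then immediately satisfied for it. By \Cref{lem:solve-badgraph}, the subgraph induced by the remaining ($\bot$-outputting) nodes decomposes into connected components, each of which is a member of $\mathcal{G}$: an $h \times w$ vertical grid with $h = 2^\ell \ge w$ and a properly attached tree-like structure on top of every column.

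Next, I solve the ``hard'' part of $\Pi$ on each such component. Each grid node climbs its column's tree from its leaf position up to the root, recording the sequence of $\llch/\lrch$ edges traversed; in $O(\ell) \le O(\log n)$ rounds this reveals both the row index $y \in \{0, \dots, h-1\}$ and the height $h = 2^\ell$. Fix a threshold $T = (c+1)\log n$ for the target exponent $c$. If $h \le T$, then the whole component has diameter $O(\log n)$ (grid diameter $O(\log n)$ plus tree diameter $O(\log \log n)$), so the node gathers the entire component in $O(\log n)$ further rounds and deterministically sets $b_y := b_{\mathrm{in},y}$, where $b_{\mathrm{in},y}$ is the input bit at the rightmost grid node in row $y$. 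If instead $h > T$, the node sets $b_y := r_y$, the $y$-th bit of the shared random string. In both cases, rightmost-column grid nodes set their $\lyes/\lno$ coordinate to $\lyes$ iff $b_y = b_{\mathrm{in},y}$ (satisfying constraint~\ref{constr:pi:basecase}), and all other grid nodes simply output $\lyes$. Finally, each internal tree node aggregates the OR of its two children's $\lyes/\lno$ values bottom-up, taking another $O(\log h)$ rounds and ensuring constraints~\ref{constr:pi:or} and~\ref{constr:pi:yes}.

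It remains to check correctness and bound the failure probability. For small components, the assignment $b_y := b_{\mathrm{in},y}$ makes every row match, so every tree root is $\lyes$ deterministically. For a large component ($h > T$), the rightmost-column tree root is $\lno$ only if all $h$ bits $r_0, \dots, r_{h-1}$ disagree with the (adversarial) rightmost inputs of their rows; since the $r_y$ are independent, this occurs with probability $(1/2)^h \le 2^{-T} = n^{-(c+1)}$. A union bound over the at most $n$ large components gives a total failure probability of at most $n^{-c}$, matching the target. The total running time across all steps is $O(\log n)$.

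The only care point is that the case split on $h$ must be decidable \emph{locally} before a grid node commits to an output; fortunately, the tree climb supplies both $y$ and $h$ well within the $O(\log n)$ time budget, so the branching is entirely within each grid node's local view. Everything else — the certification step, the OR aggregation up the tree, and the tail bound from independence of the shared bits — is routine once $y$ and $h$ are available.
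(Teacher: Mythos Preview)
Your proof is correct and follows essentially the same approach as the paper's: invoke \Cref{lem:solve-badgraph} to reduce to components in $\mathcal{G}$, case-split on the grid height (brute-force small components, use the shared random bit indexed by row position for large ones), and OR-aggregate up the trees. Your version is in fact slightly more careful than the paper's, which analyzes the failure probability for a single large component but does not explicitly take a union bound over multiple components; you handle this cleanly by setting the threshold to $(c+1)\log n$ and union-bounding over at most $n$ components.
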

\begin{proof}
	By \Cref{lem:solve-badgraph}, nodes can spend $O(\log n)$ rounds to produce a solution for $\Pi^{\lbadgraph}$ where the graph induced by nodes that output $\bot$ satisfies that each connected component is a graph in $\mathcal{G}$. Observe that such labeling satisfies constraint \ref{constr:pi:bad} of $\mathcal{C}^\Pi$.
	
	Nodes that have an output different from $\bot$ immediately terminate. Then, each node $u$ does the following. Node $u$ spends $O(\log \log n)$ rounds to check if the height of the grid structure is at most $c \log n$. In this case, the width of the grid is also guaranteed to be at most $c \log n$. This implies that nodes are in a small connected component that is a valid graph of the family. In this case, each node $u$ can spend $O(\log n)$ rounds to know the bit $b_v$ given as input to the right-most node $v$ in the row of $u$. In this case, let $b_u = b_v$.
	
	Otherwise, i.e., the height is strictly larger than $c \log n$, if $u$ is a grid-node, it spends $O(\log n)$ rounds to compute its position $i$ in its column. Then, $u$ sets $b_u$ as the $i$th shared random bit. Note that this implies that nodes in the same row pick the same random bit.

	Then, if $u$ is not in the last column, it outputs $(b_u,\lyes)$, while if $u$ is in the last column it outputs $(b_u,x_u)$, where $x_u = \lyes$ if $x_u$ is equal to the bit given as input to $u$, and $x_u = \lno$ otherwise.
	Finally, nodes in the tree-like structures output $\lyes$ if at least one of their children has $\lyes$, and $\lno$ otherwise.
	
	By construction of the output, all constraints \ref{constr:pi:grid}--\ref{constr:pi:or} are clearly satisfied. If the grid structure has height at most $c \log n$, then all nodes in the trees output $\lyes$ and constraint \ref{constr:pi:yes} is also satisfied. If the grid structure has height strictly larger than $c \log n$, since for each right-most node $v$ it holds that the randomly picked bit is the same as its input bit with probability $1/2$, constraint \ref{constr:pi:yes} is satisfied with probability at least $1 - 1 / 2^{c\log n} = 1 - 1 / n^c$, as required.
\end{proof}

\section{Complexity in other models}
\label{sec:other-models}

In this section we explore the complexity of problem $\Pi$ in other models, beyond the usual \local model.

\subsection{S\local model with private randomness}\label{sec:slocal}

In the \slocal model~\cite{ghaffari17_complexity_stoc}, nodes are processed
sequentially according to an ordering $\sigma = v_1,\dots,v_n$. 
The order $\sigma$ is controlled by an adversary; that is, an \slocal algorithm
must work for any such sequence $\sigma$.
Let $T$ be given as a function of $n$.
When processing node $v_i$, an algorithm with time complexity $T$ has access to
the neighborhood at distance $T$ of $v_i$, including whatever information the
nodes there may have in their memory (and thus also their output).
In the randomized version of the model (with private randomness), the algorithm
is given access to a (read-once) infinite sequence of random bits.

\begin{theorem}\label{thm:lb-slocal}
	In the \slocal model with private randomness, solving the problem $\Pi$
	requires time $\Omega(\sqrt{n})$.
\end{theorem}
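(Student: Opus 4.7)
The plan is to adapt the argument of \Cref{th:private-rand} to the \slocal setting. The central insight is that, although \slocal permits sequential propagation of information through chains of already-processed nodes, the adversary (who controls both the ordering $\sigma$ and the input labeling) can arrange $\sigma$ so that, for every row of the grid, the outputs of the left-most node $u_i$ and the right-most node $v_i$ are independent random variables. Once this independence is secured, the probability analysis from the \local lower bound carries over essentially verbatim.

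First I would restrict attention to square hard instances in $\mathcal{G}$ with $h=w=\Theta(\sqrt{n})$, so that by \Cref{lem:pi-good-sol} any correct solution assigns the same bit to every node of a row and at least one row matches the input bit of its right-most node. Assume for contradiction that an \slocal algorithm $\mathcal{A}$ with locality $T(n)=o(\sqrt{n})$ solves $\Pi$ with success probability at least $1-1/n$, and take $n$ large enough that $T<(w-1)/2$; then $d(u_i,v_j)\ge w-1>2T$ for every pair of rows $(i,j)$, and in particular the balls $B(u_i,T)$ and $B(v_j,T)$ are pairwise disjoint.

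Second, I would fix the adversarial ordering that processes the left-most grid nodes $u_1,\ldots,u_h$ first (in any order), then the right-most grid nodes $v_1,\ldots,v_h$, and only afterwards the remaining nodes. When $u_i$ is processed, all information the algorithm can consult (input labels, private random bits, and outputs of previously committed nodes within distance $T$) lies inside $U:=\bigcup_j B(u_j,T)$, which sits entirely in the leftmost strip of the grid together with parts of the tree-like structures above; analogously, when $v_i$ is processed the available information lies inside $V:=\bigcup_j B(v_j,T)$, and by the distance condition $U\cap V=\emptyset$. Since the private random bits of disjoint node sets are independent, the output of each $u_i$ is a deterministic function of the randomness in $U$ and that of each $v_i$ a deterministic function of the randomness in $V$, whence $u_i$ and $v_i$ are independent random variables.

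With independence established, the probability argument of \Cref{th:private-rand} applies nearly verbatim: letting $p_{u_i}$ and $p_{v_i}$ denote the marginal probabilities of outputting $0$, the row-consistency constraint combined with the $1-1/n$ success requirement forces $p_{u_i}(1-p_{v_i})<1/n$ and $(1-p_{u_i})p_{v_i}<1/n$, so either both marginals are at most $2/n$ or both are at least $1-2/n$. Choosing the input bit of each $v_i$ adversarially to be the less likely output bit makes each row match its input with probability at most $2/n$, and a union bound over the $\Theta(\sqrt{n})$ rows gives overall success probability $O(1/\sqrt{n})$, contradicting the assumed $1-1/n$ bound for large $n$. The main obstacle is precisely the independence step: in \slocal, sequential processing can in principle create long-range dependencies through previously committed outputs, and the entire argument hinges on engineering an ordering that prevents any such chain from horizontally crossing the grid and linking $u_i$ to $v_i$ at the moment either is processed.
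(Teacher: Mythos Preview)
Your proposal is correct and follows essentially the same approach as the paper: fix the adversarial ordering to process the left-most column first and then the right-most column, use the locality bound to ensure the $T$-balls of the two columns are disjoint (hence the outputs depend on disjoint pools of private randomness and are independent), and then replay the probability argument of \Cref{th:private-rand} verbatim. Your write-up is in fact more explicit than the paper's about why independence holds in \slocal (you spell out the sets $U$ and $V$ and why no memory written during the processing of the $u_i$'s can be read when processing the $v_i$'s), which is a welcome clarification.
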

\begin{proof}
We consider the subfamily of graphs in $\mathcal{G}$ with grid dimensions $w$
and $h$ where $w = h$, and we apply \Cref{lem:pi-good-sol}. 

	Assume towards a contradiction that there exists an algorithm $\mathcal{A}$
	that solves $\Pi$ in $o(\sqrt{n})$ rounds. The difference compared to the
	proof of Theorem~\ref{th:private-rand} is that we need to provide a suitable
	ordering under which the nodes are processed. Having done so, the proof
	remains the same.

	Let $N$ be such that, for any $n\ge N$, the time complexity $T$ of
	$\mathcal{A}$ in graphs of size $n$ is at most $w/3$. Consider the nodes
	in the left- and right-most columns of the grid. Observe that, if we select
	any pair of nodes in the two columns, their $T$-radius neighborhoods do not intersect.

	Let us now fix the processing order such that all the nodes of the left-most
	column come first, followed by the nodes in the right-most column (from top to
	bottom), and finally by the remaining nodes in arbitrary order.

	Consider a row where its left-most node is $u$ and its right-most one $v$. By
	construction, the outputs of $u$ and $v$ must be independent. 
	Let $p_u$ (resp., $p_v$) be the probability that $u$ (resp., $v$) outputs $0$.
	Observe that $p_u (1 - p_v) < 1/n$ and $(1- p_u)p_v < 1/n$; otherwise, the
	algorithm would produce a row that does not have the same bit everywhere
	(contradicting \Cref{lem:pi-good-sol}) with too large probability. This
	implies that either both $p_u$ and $p_v$ are at most $2/n$ or that they are
	both at least $1 - 2/n$.
	
	We now restrict to instances where, for each row with left-most node $u$ and
	right-most node $v$, the input of $v$ is $0$ if $p_u \le 2/n$, and $1$
	otherwise. On these instances, the success probability of the algorithm is
	upper-bounded by the probability that at least one left-most node picks its
	least probable output. Since the number of rows is $O(\sqrt{n})$, by the union
	bound this happens with probability at most $c / \sqrt{n}$ for some constant
	$c$. 
	For large enough $n$, this is strictly smaller than $1 - 1/n$, implying that
	the algorithm fails with too large probability.
\end{proof}

\subsection{\Detolcl model}\label{sec:detolcl}

The \onlinelocal model~\cite{akbari23_locality_icalp} is slightly more
powerful than \slocal. Again we have an (adversarial) sequence $\sigma =
v_1,\dots,v_n$ in which the nodes are processed. Upon node $v_i$ being revealed,
the algorithm is given knowledge of all nodes (including their inputs) and their
connections in the $T$-radius neighborhood of $v_i$ (in addition to everything
the algorithm has seen so far).

Here we restrict ourselves to the deterministic variant of \onlinelocal.

\begin{theorem}\label{thm:lb-detolcl}
	In the \detolcl model, solving the problem $\Pi$ requires $\Omega(\sqrt{n})$
	rounds.
\end{theorem}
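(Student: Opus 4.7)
The plan is to adapt the argument for \cref{thm:lb-slocal} to the deterministic online setting, where, since there is no randomness, the analysis actually becomes simpler: the two-point probability argument disappears, and a direct contradiction follows from an adversarial input choice justified entirely by determinism. I restrict attention to the subfamily of $\mathcal{G}$ used for the \slocal bound, namely square hard instances with $w = h = \Theta(\sqrt{n})$, so that by \cref{lem:pi-good-sol} any correct output must satisfy row-uniformity (C1) and the matching-input requirement (C2). I then assume for contradiction that some \detolcl algorithm $\mathcal{A}$ solves $\Pi$ in $T = o(\sqrt{n})$ rounds, take $n$ large enough that $T < w/3$, and reuse the adversarial ordering from \cref{thm:lb-slocal}: reveal all leftmost grid-column nodes first, then all rightmost grid-column nodes, and finally the remaining nodes in arbitrary order.

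The central observation is that, because $T < w/3$ and the per-column tree-like structures are pairwise internally disjoint, the cumulative view of $\mathcal{A}$ after processing the entire leftmost column contains no rightmost-column node, and in particular no rightmost input. Therefore, for each row $i$, the bit $b_i$ that $\mathcal{A}$ commits to at the leftmost node $u_i$ --- which by \cref{lem:pi-good-sol} must lie in a pair from $\{0,1\}\times\{\lyes,\lno\}$, since otherwise the solution already violates $\mathcal{C}^{\Pi}$ --- is a deterministic function of the graph and of the inputs outside the rightmost column. Since $\mathcal{A}$ is deterministic, the adversary may first simulate $\mathcal{A}$ under the fixed ordering with any dummy rightmost inputs, read off $b_1,\dots,b_h$ in advance, and then construct the \emph{actual} instance in which the input of each rightmost-column node $v_i$ is $\neg b_i$. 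On this real instance $\mathcal{A}$ still commits to $b_i$ at $u_i$, because its view during leftmost-column processing is identical to that in the simulation; by C1 every $v_i$ must then also output $b_i$, yet its input is $\neg b_i$, so no row satisfies C2 --- contradicting correctness.

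The only subtlety, rather than a genuine obstacle, is justifying that the adversary may ``look at'' $\mathcal{A}$'s leftmost-column outputs before fixing the rightmost inputs: determinism of $\mathcal{A}$ is exactly what lets the two simulations above be compiled into a single fixed graph-and-input instance on which $\mathcal{A}$ verifiably fails, so the adversary's apparent adaptivity is a genuine existential statement and no probabilistic bookkeeping is required. This is also precisely why the same lower bound holds unconditionally in \detolcl, whereas the \slocal proof had to rule out small failure probabilities via a union bound over the $\Theta(\sqrt{n})$ rows.
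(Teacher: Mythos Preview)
Your proposal is correct and follows essentially the same approach as the paper: restrict to square hard instances, process the leftmost column first so that its outputs are fixed before any rightmost input is seen, and then set each rightmost input to the complement of the corresponding leftmost output, contradicting \cref{lem:pi-good-sol}. Your extra care about the tree-like structures not creating shortcuts and about compiling the adversary's simulation into a single fixed instance is sound but not strictly needed beyond what the paper states.
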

\begin{proof}
	As in the proof of \cref{thm:lb-slocal}, we consider the subfamily of graphs
	in $\mathcal{G}$ with grid dimensions $w$ and $h$ where $w = h$, and we apply
	\Cref{lem:pi-good-sol}. 
	Assume towards a contradiction that there exists an algorithm $\mathcal{A}$
	that solves $\Pi$ in $o(\sqrt{n})$ rounds. 
	Let $N$ be such that, for any $n\ge N$, the time complexity of $\mathcal{A}$
	in graphs of size $n$ is at most $w/3$.
	
	Consider the processing order where all nodes in the left-most column come
	first in the sequence, followed by all other nodes in arbitrary order.
	As the algorithm is deterministic, it must produce some output on each node,
	in particular without any knowledge of the input to the right-most column.
	Hence we can easily pick an adversarial input:
	For each node $u$ on the left-most column, letting $o_u$ be its output and $v$
	the corresponding right-most node in the same row as $u$, we set $1-o_u$ as
	the input to $v$.
	Since the algorithm cannot modify its outputs, it does not produce a valid
	solution to $\Pi$.
\end{proof}

\subsection{\Boundep}\label{sec:boundep}

The \boundep~\cite{akbari24_online_arxiv} (with locality $T$)
encompasses any algorithm where the output distributions of nodes that are at
distance more than $T$ from one other are independent.
Equivalently, given two graphs $G_1$ and $G_2$ (with inputs) with the same
number of nodes, if any subgraph $H_1$ of $G_1$ is isomorphic to a subgraph
$H_2$ of $G_2$, then the algorithm must have identical marginal distributions on
$H_1$ and $H_2$.

\begin{theorem}\label{thm:lb-boundep}
	In the \boundep, solving the problem $\Pi$ requires locality $\Omega(\sqrt{n})$.
\end{theorem}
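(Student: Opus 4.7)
The plan is to mirror the proof of \Cref{th:private-rand}, replacing the independence argument (which there came from disjoint radius-$T$ neighborhoods under private randomness) with the independence guarantee furnished directly by the definition of bounded dependence. First I would restrict to the subfamily of $\mathcal{G}$ whose grids are square ($w = h$), so that $w = \Theta(\sqrt{n})$, and apply \Cref{lem:pi-good-sol} to characterize any valid output as row-constant with at least one row's output bit matching its right-most input bit.

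Assuming for contradiction a bounded-dependence distribution with locality $T = o(\sqrt{n})$ solving $\Pi$ with success probability at least $1 - 1/n$, I would fix $n$ large enough that $T < w/3$. For every row, the left-most node $u$ and right-most node $v$ lie at distance $w - 1 > T$, so by bounded dependence the outputs of $u$ and $v$ are independent. From here the calculation is identical to that in \Cref{th:private-rand}: letting $p_u, p_v$ denote the probabilities of outputting $0$, the row-consistency requirement combined with independence forces $p_u(1 - p_v) \leq 1/n$ and $(1 - p_u) p_v \leq 1/n$, and hence either $p_u, p_v \leq 2/n$ or $p_u, p_v \geq 1 - 2/n$. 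The adversary then sets the input of each right-most node $v$ to be the bit opposite of $u$'s most likely output, so that each row satisfies the matching requirement of \Cref{lem:pi-good-sol} with probability at most $2/n$; a union bound over the $\Theta(\sqrt{n})$ rows caps the overall success probability by $O(1/\sqrt{n})$, below the required $1 - 1/n$, a contradiction.

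The main obstacle I expect is justifying the adversarial input choice. Unlike in \Cref{th:private-rand,thm:lb-slocal}, where the algorithm's local view at $u$ plainly excluded the right-most column, here I must argue that modifying the right-most inputs does not alter the marginal on $u$. I plan to handle this via the equivalent ``isomorphic subgraph'' formulation of bounded dependence recalled in \Cref{sec:boundep}: the subgraph induced by the left-most column together with its radius-$T$ neighborhood is isomorphic as an input-labeled graph under the two input assignments, so its output marginal is identical. Equivalently, a bounded-dependence family with locality $T$ is non-signaling with locality $O(T) = o(\sqrt{n})$, which is exactly the invariance that lets the adversary pin down $p_u$ before committing to the right-most inputs and therefore transplant the whole argument of \Cref{th:private-rand} to this setting.
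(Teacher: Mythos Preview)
Your proposal is correct and follows essentially the same approach as the paper, which simply observes that the proof of \Cref{th:private-rand} carries over verbatim because the left-most and right-most nodes in each row are farther than $T$ apart and hence have independent outputs by definition of bounded dependence. Your additional paragraph justifying that the marginal at $u$ is unaffected by the adversarial choice of input at $v$ (via the isomorphic-subgraph formulation, equivalently non-signaling) is a worthwhile clarification that the paper's one-line proof elides, but it does not constitute a different route.
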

\begin{proof}
	The proof of Theorem~\ref{th:private-rand} works directly here since, for each
	row, the output distributions of the left- and right-most nodes are
	independent (as their neighborhoods do not intersect).
\end{proof}

\ifanon\else
\section*{Acknowledgements}

This work was started in the \emph{Research Workshop on Distributed Algorithms} (RW-DIST 2024) in L'Aquila, Italy. We would like to thank all workshop participants for discussions, and in particular Henrik Lievonen and Amirreza Akbari for helping us with this research project. This work has been partially funded by MUR (Italy) Department of Excellence 2023--2027, the PNRR MIUR research project GAMING ``Graph Algorithms and MinINg for Green agents'' (PE0000013, CUP D13C24000430001). Augusto Modanese is supported by the Helsinki Institute for Information Technology (HIIT).

\fi 

\printbibliography

\newpage
\appendix

\section{Monte Carlo algorithms require some knowledge of \texorpdfstring{\boldmath$n$}{n}}
\label{appx:monte-carlo}

In this appendix, we show that any truly (i.e., non-deterministic) Monte Carlo
algorithm (regardless of whether it uses shared or private randomness) for any
component-wise checkable problem requires some form of knowledge of the number
of nodes $n$, regardless of its locality.
Note component-wise checkable problems are a much broader class than LCLs.

\begin{definition}[Component-wise checkability]
	A labeling problem $\Pi$ is \emph{component-wise checkable} if the following
	holds: 
	For every labeled graph $G$ and every connected component $H$ of $G$, $G$
	satisfies $\Pi$ if and only if its labeled subgraphs $H$ and $G - H$ (seen as
	two separate graphs) both satisfy $\Pi$.
\end{definition}

An example of a problem that does not qualify as such is non-component-wise
leader election, that is, the labeled graph contains exactly one node marked as
the leader.
(Of course, component-wise leader election is a component-wise checkable
problem.)

\begin{definition}[Monte Carlo algorithm]\label{def:monte-carlo}
	A randomized distributed algorithm $\mathcal{A}$ is said to be a \emph{Monte
	Carlo} algorithm if, for any constant $c > 0$, there is $n_0 > 0$ such that,
	if $\mathcal{A}$ is ran on a graph $G$ with $n \ge n_0$ nodes, then the
	success probability of $\mathcal{A}$ is at least $1 - n^{-c}$.
	The algorithm $\mathcal{A}$ is \emph{error-free} if its success probability is
	$1$.
\end{definition}

In particular, error-free algorithms can be trivially derandomized.

\begin{theorem}
	Let $\Pi$ be component-wise checkable, and let $\mathcal{A}$ be a Monte Carlo
	algorithm that solves $\Pi$ (with any locality) and is given no knowledge of
	$n$ whatsoever.
	Then $\mathcal{A}$ is error-free.
\end{theorem}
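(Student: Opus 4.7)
The plan is to argue by contradiction: if $\mathcal{A}$ fails on some input graph with positive probability, then by padding that graph with disjoint dummy nodes we can obtain arbitrarily large instances on which $\mathcal{A}$ still fails with probability at least the original, violating the Monte Carlo guarantee.

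\textbf{Setup.} Suppose $\mathcal{A}$ is not error-free. Then there exist a labeled graph $G_0$ (with valid IDs) and a number $p_0 > 0$ such that, when $\mathcal{A}$ is run on $G_0$, the output fails to satisfy $\Pi$ with probability at least $p_0$. Let $n_0 = |G_0|$. For every $n \ge n_0$, I would form $G_n = G_0 \sqcup H_n$, where $H_n$ is an arbitrary labeled graph on $n - n_0$ nodes (for instance, $n - n_0$ isolated vertices) whose IDs are drawn from the permissible range for $n$-node graphs and avoid the IDs used in $G_0$. This is possible because the ID range grows polynomially in $n$ while $G_0$ occupies only $n_0$ IDs.

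\textbf{Key step.} The heart of the proof is the claim that the joint distribution of $\mathcal{A}$'s outputs on the vertices of $G_0$ is identical whether $\mathcal{A}$ is run on $G_0$ alone or on $G_n$. I would justify this using three observations: (i) $G_0$ is a separate connected component of $G_n$, so for any locality $T$ the $T$-hop view in $G_n$ of every $v \in G_0$ is identical to its view in $G_0$; (ii) since $\mathcal{A}$ has no knowledge of $n$, its local rule cannot depend on the ambient ID range; and (iii) the distribution of the (private or shared) random source is independent of the graph. Consequently, the probability that the restriction of $\mathcal{A}$'s output to $G_0$ violates $\Pi$ is still $p_0$, and by component-wise checkability this is a lower bound on the probability that the output on all of $G_n$ violates $\Pi$.

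\textbf{Conclusion and main obstacle.} To finish, fix any constant $c > 0$. By the Monte Carlo property there is $n^*$ such that for every $n \ge n^*$ the failure probability of $\mathcal{A}$ on an $n$-node graph is at most $n^{-c}$. Choosing $n \ge \max(n^*, n_0)$ large enough that $n^{-c} < p_0$ and applying the previous step to $G_n$ yields a contradiction. The main subtlety, and the point that needs the most care, is the precise meaning of ``no knowledge of $n$ whatsoever'': since the standard \local convention hands each node an ID from $\{1, \ldots, n^\kappa\}$ for some constant $\kappa$, any algorithm implicitly has an upper bound on $n$. The hypothesis must be read as forbidding this inference, so that $\mathcal{A}$'s local rule really does depend only on the concrete inputs and randomness it sees, and the distribution on $G_0$ is genuinely preserved under padding.
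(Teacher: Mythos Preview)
Your proposal is correct and follows essentially the same padding argument as the paper: assume a bad instance $G_0$ with failure probability $p_0>0$, take a disjoint union with a large dummy graph, observe that the output distribution on the $G_0$ component is unchanged (since $\mathcal{A}$ has no knowledge of $n$), and use component-wise checkability to lower-bound the global failure probability, contradicting the Monte Carlo guarantee for large $n$. Your treatment is in fact a bit more explicit than the paper's---you spell out the role of component-wise checkability and flag the ID-range subtlety---but the underlying idea is identical.
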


\begin{proof}
	Suppose there is a graph $G$ for which $\mathcal{A}$ is not error-free, that
	is, $\mathcal{A}$ fails on $G$ with probability $p > 0$.
	Let $c > 0$ be fixed, and let $n_0$ be as in \cref{def:monte-carlo}.
	Consider the graph $G' = G \cup H$ where $H$ is disconnected from $G$ and
	contains $N > 1/p^{1/c}$ nodes.
	Since $\mathcal{A}$ is given no knowledge of the number of nodes of $G'$, its
	distribution of outputs on $G$ as a component of $G'$ is identical to the
	distribution obtained when running $\mathcal{A}$ on $G$ alone.
	(Failing on the component $G$ might be correlated with failing on $H$, but
	this is immaterial.)
	In particular the failure probability of $\mathcal{A}$ on $G'$ is at least $p
	> 1/N^c$, and thus its success probability is strictly smaller than $1 -
	N^{-c}$, contradicting \cref{def:monte-carlo}.
	It follows that $p = 0$.
\end{proof}

\end{document}